\journal{Information and Computation}
\tikzset{align at top/.style={baseline=(current bounding box.north)}}
\tikzstyle{every node}=[font=\scriptsize]
\tikzstyle{state} = [draw,fill=white,circle,thick,align=center,inner sep=0pt,minimum size=4.5mm]
\tikzstyle{lstate} = [draw,fill=white,rectangle,rounded corners,thick,align=center,inner sep=2pt]
\tikzstyle{dot} = [fill,circle,inner sep=0mm,minimum size=1.25mm,line width=0mm]
\theoremstyle{plain}
	\newtheorem{theorem}{Theorem}[section]
	\newtheorem{lemma}[theorem]{Lemma}
	\newtheorem{corollary}[theorem]{Corollary}
	\newtheorem{proposition}[theorem]{Proposition}
	\newtheorem{definition}[theorem]{Definition}
\theoremstyle{definition}
	\newtheorem{example}[theorem]{Example}
\theoremstyle{remark}
	\newtheorem*{note*}{Note}
	\newtheorem{remark}[theorem]{Remark}
	\newtheorem*{remark*}{Remark}
\theoremstyle{claimstyle}
	\newtheorem*{claim*}{Claim}
\newcommand{\mdp}{\mathcal{M}}
\newcommand{\prob}{\mathbb{P}}
\newcommand{\RR}{\mathbb{R}}
\newcommand{\QQ}{\mathbb{Q}}
\newcommand{\NN}{\mathbb{N}}
\newcommand{\pathset}[2]{\mathrm{Path}(#1, #2)}
\newcommand{\eps}{\bm{\varepsilon}}
\DeclareMathOperator*{\argmin}{arg\,min}
\DeclareMathOperator*{\argmax}{arg\,max}
\DeclareMathOperator*{\supp}{supp}
\DeclareMathOperator*{\suc}{succ}
\newcommand{\anote}[2][]{
	\reversemarginpar } %
\newcommand{\knote}[2][]{
	\reversemarginpar } %
\definecolor{gryffindor}{RGB}{220,0,1}
\definecolor{ravenclaw}{RGB}{14,26,164}
\definecolor{slytherin}{RGB}{42,98,61}
\begin{document}

\begin{frontmatter}

\title{On Piecewise Affine Reachability\\ with Bellman Operators}

\affiliation[label:tu]{organization={TU Wien},%
	country={Austria}}
\affiliation[label:nii]{organization={National Institute of Informatics},%
	country={Japan}}
\affiliation[label:sokendai]{organization={The Graduate University for Advanced Studies (SOKENDAI)},%
	country={Japan}}

\author[label:tu]{Anton Varonka\corref{cor1}} %
\author[label:nii,label:sokendai]{Kazuki Watanabe} %
\cortext[cor1]{Corresponding author. E-mail: anton.varonka@gmail.com}
\begin{abstract}
 We study the following reachability problem for piecewise affine maps: Given two vectors $\bm{s}, \bm{t} \in \QQ^d$ and a piecewise affine map $f\colon \QQ^d\rightarrow \QQ^d$, does there exist $n\in \NN$ such that $f^{n}(\bm{s}) = \bm{t}$?
In this work, we focus on  this reachability problem for a subclass of piecewise affine maps---Bellman operators arising from Markov decision processes. 
 We prove that the reachability problem for $\max$- and $\min$-Bellman operators is decidable in any dimension under either of the following conditions: (i) the target vector~$\bm{t}$ is not the fixed point of the operator~$f$; or (ii) the initial and target vectors~$\bm{s}$ and~$\bm{t}$ are comparable with respect to the componentwise order.
Furthermore, we show that in the two-dimensional case, the reachability problem for Bellman operators is decidable for arbitrary~$\bm{s}, \bm{t} \in \QQ^2$.
This stands in sharp contrast to the known undecidability of reachability for general piecewise affine maps in dimension $d = 2$. 
\end{abstract}

\begin{keyword}
piecewise affine map \sep reachability \sep value iteration \sep Markov decision process \sep Bellman operator

\end{keyword}

\end{frontmatter}

\section{Introduction}
Solving reachability problems is central to formal verification, but it is challenging, as evidenced by a body of work that has been pursued for decades.
Specifically, the reachability problem that we are interested in asks the following question: 
Given two vectors $\bm{s}, \bm{t}$ and a map $f$, is there $n\in  \NN$ such that $f^n(\bm{s}) = \bm{t}$? 
One of the seminal results is given by Kannan and Lipton~\cite{kannan1980decid,kannan1986poly}.
They answer the decidability question in the affirmative---by presenting a polynomial-time algorithm---for \emph{affine maps} $f$ and vectors $\bm{s}, \bm{t}$ with rational coefficients.\footnote{The results by Kannan and Lipton hold for linear maps. Since the reachability problem is decidable in arbitrary dimension, a standard technique extends it to the affine maps by encoding a $d$-dimensional affine map as a linear map in dimension~$d+1$; see also~\cite[Section~5]{Tiwari2005termination}.}

Unfortunately, the reachability problem becomes undecidable by slightly extending the class of maps beyond the affine maps---\emph{piecewise affine maps (PAMs)}. 
Koiran et al.~\cite{koiran1994pam} show that the reachability problem for PAMs is undecidable even in the two-dimensional space, 
which witnesses the significant difficulty of the problem, compared to that of affine maps. 
Specifically, a PAM~$f$ on the domain~$\mathcal{D}$ is a function with the property that
for some family~$\{P_1, \dots, P_k\}$ of sets 
such that $P_1 \cup \dots \cup P_k = \mathcal{D}$, 
the restriction of~$f$ to each~$P_i$ is an affine function.
We consider piecewise affine maps over the domain~$\mathcal{D} = [0,1]^d$, where $[0,1]$ is the unit interval.
Each of the finitely many \emph{pieces}~$P_1, \dots, P_k$ is defined by a conjunction of finitely many linear inequalities.

\begin{example}\label{ex:intro}
	Consider an example of a PAM in the dimension~$d=2$. Let
\(f : [0,1]\times[0,1] \rightarrow  [0,1]\times[0,1]\) be defined by $f(x_1, x_2) = (x_1', x_2')$ with
\begin{equation*}
	x_1' = \frac{1}{2}x_2 + \frac{1}{3}, \hspace{4cm}
	x_2' =	\begin{cases}
		\frac{1}{2}x_1 + \frac{1}{2},\quad &\text{ if $x_1 \geq x_2$},\\
		\frac{1}{4}x_1 + \frac{1}{4}x_2 + \frac{1}{2}, \quad &\text{ if $x_1 < x_2$.}
	\end{cases}
\end{equation*}
Here, $P_1$ is defined as $\{\bm{x} = (x_1, x_2) : x_1 \geq x_2\}$ and $P_2$ as $\{\bm{x} = (x_1, x_2) : x_1 < x_2\}$,
where according to the definition $P_1 \cup P_2 = \mathcal{D}$.
\end{example}

In fact, piecewise affine maps characterise a very rich mathematical object~\cite{Bell2008undec,blondel2000survey}.
The reachability problem for PAMs is thus one of a series of challenging problems
whose crux lies in the unpredictable behaviour of iterative maps 
and corresponding discrete-time dynamical systems, see also~\cite{Bell2008undec,ChonevOW15,Karimov2022dynsys,Tiwari2005termination}.

A natural question might be: \emph{When does the reachability problem for PAMs become decidable?}	
Indeed, the reachability problem for PAMs on a unit interval (dimension one) has received attention in recent research~\cite{bournez2018onedpam,ghahremani2023injective,kuijpers2021skew},
where classes of PAMs with decidable reachability problems have been found
while fostering new elaborate techniques.
Yet, the decidability of the general one-dimensional reachability problem for PAMs remains open,
even for maps defined with two pieces.

\subparagraph{Our Approach.}
In this work, we propose an orthogonal approach to investigate the challenges behind the reachability problem for PAMs, focusing not on the restriction of the \emph{dimension} or the number of \emph{pieces},
but on the restriction of the \emph{structure} of PAMs.
Specifically, we consider the reachability problem for the \emph{Bellman operators} on \emph{Markov decision processes (MDPs)}---Bellman operators are, in fact, PAMs that have been studied mostly in the context of software verification or reinforcement learning~\cite{bk,bellman1957mdp,Puterman94}.
For instance, the PAM in~\cref{ex:intro} is a Bellman operator.

MDPs are a standard probabilistic model for systems with uncertainties, and the least fixed points of the Bellman operators $\Phi\colon [0, 1]^{d}\rightarrow [0, 1]^{d}$ represent the ``optimal'' reachability probability to the specific target state. 
Here, optimal means that the maximum/minimum reachability probability is induced by a \emph{scheduler} that resolves the non-deterministic behaviour on MDPs. 
We formulate our target problem as follows: Given two vectors $\bm{s}, \bm{t}\in [0, 1]^{d}$ and a Bellman operator $\Phi\colon [0, 1]^{d}\rightarrow [0, 1]^{d}$, is there $n\in \NN$ such that $\Phi^{n}(\bm{s}) = \bm{t}$?
We refer to this problem as the \emph{reachability problem for Bellman operators} (or BOR, for Bellman Operator Reachability).

Under a reasonable assumption, from any vector $\bm{s}$, the sequence $\langle\Phi^{n}(\bm{s})\rangle_{n\in \NN}$ converges to the unique fixed point $\mu\Phi$.
Here, the unique fixed point is precisely the vector of optimal reachability probabilities from each state to the target state.
The iterative procedure where~$\mu\Phi$ is approximated by applying~$\Phi$ is referred to as~\emph{value iteration} and is widely studied~\cite{bk,Baier0L0W17,chatterjee2008vi}.
Note that the unique fixed point is computable in polynomial time by linear programming~\cite{bk}, so for our problem we can assume that we know the unique fixed point $\mu\Phi$ a priori. 
Nevertheless, even a convergent sequence does not generally reach~$\mu\Phi$ at any~$n$.
In fact, the question of the reachability to the fixed point (in finite time) is known in both theoretical computer science and software verification communities.
A case in point is the discussion in~\cite{koiran1994pam}, where this question is explicitly asked for one-dimensional PAMs (see also~\cite{blondel2001note}).
Furthermore, in a value iteration survey~\cite{chatterjee2008vi} the same property is listed.
While the authors observe that fixed points are not reachable in general, they do not discuss the decidability aspect.
In the present paper, we investigate the decidability of reaching~$\bm{t} = \mu\Phi$ as part of our problem.

\subparagraph{Contributions.}
We present some decidability results for our target problem under a condition that ensures the existence of the unique fixed point $\mu\Phi$ of the Bellman operators~\cite{HaddadM18}. 
First, we show that the reachability problem for Bellman operators is decidable for any dimension if the target vector $\bm{t}$ does not coincide with $\mu\Phi$ ($\bm{t}\not = \mu\Phi$). 
The proof uses the fact that from any vector $\bm{s}$, the iteration of the Bellman operator converges to $\mu\Phi$.
It becomes rather non-trivial when $\bm{t} = \mu\Phi$, that is, for the reachability problem to the unique fixed point $\mu\Phi$. 
We show that the reachability problem for Bellman operators when $\bm{t} = \mu\Phi$ is decidable for any dimension if $\bm{s}$ is comparable to $\mu\Phi$, that is, either $\bm{s}\leq \mu\Phi$ or $\mu\Phi \leq \bm{s}$ holds for the componentwise order.  
The crux is to show that eventually only ``optimal'' actions are chosen, and we reduce the reachability problem to a simple qualitative reachability problem that can be shown decidable. 

Finally, we address the remaining case: $\bm{t} = \mu\Phi$ and $\bm{s}$ is incomparable to $\mu\Phi$. 
In dimension~$2$, we show an algorithmic procedure also for this case---finding the last piece of the puzzle---the reachability problem for two-dimensional Bellman operators is thus decidable. 
Our argument is based on analysing the equivalent problem for matrix semigroups, and 
our proof exploits 
the existence of a total order on the lines induced by actions.
To the best of our knowledge, this is the first result to give a reasonably large class of PAMs for which the reachability problem is decidable in the two-dimensional case. 

This article is the extended journal version of~\cite{VaronkaW25}, which studied the piecewise affine reachability problem for the Bellman operator with the \emph{maximum} objective.
A novel contribution of the current work is investigating the corresponding problem for the Bellman operator with the \emph{minimum} objective.
To this end, we introduce additional preliminaries for the minimum objective in~\cref{sec:prelim} and extend the proofs in~\cref{sec:efs,sec:2d} so that they uniformly handle both Bellman operators.
We refined the presentation of the results, using $\Phi$ which refers to either $\Phi_{\max}$ or $\Phi_{\min}$, 
where appropriate. 
The main theorem of~\cref{sec:efs}, \cref{compar-decid}, is stated in terms of both Bellman operators.
In~\cref{sec:2d}, we introduce new definitions to substantially simplify the case distinction of the proofs in~\cite{VaronkaW25}. 
The notion of kernel lines helps identify the core idea of the new proof. 
Thus, the decidability of the BOR problem for dimension~2 is proved for both maximum and minimum objectives.

\subparagraph{Organization.}
We outline our paper as follows. 
\begin{itemize}
	\item In~\cref{sec:prelim}, we formally define the main objects of our study, MDPs and Bellman operators, and recall some known properties. 
	\item In~\cref{sec:efs}, we show that the reachability problem for Bellman operators is decidable when either $\bm{t} \not = \mu\Phi$ (\cref{t-not-fp}) or $\bm{s}$ is comparable to $\bm{t} = \mu\Phi$ (\cref{compar-decid}). 
	Importantly, the result holds for arbitrary dimension. 
	\item In~\cref{sec:2d}, we prove that the reachability problem for Bellman operators is decidable in the two-dimensional case (\cref{2d-efs}), by presenting an algorithm that solves the remaining case ($\bm{s}$ is incomparable to $\bm{t} = \mu\Phi$). 
	\item In~\cref{relatedwork}, we discuss our result with related work, and list some future directions. 
\end{itemize}

\section{Background}\label{sec:prelim}

We first recall some definitions and properties for Markov decision processes (MDPs) and their Bellman operators, which are necessary for our development. 
We then define our target problem, namely the piecewise affine reachability problem with Bellman operators. 

\subsection{Preliminary}
\begin{definition}[MDP~\cite{Puterman94}]
	An MDP $\mdp$ is a tuple $(S, Act, \prob)$ such that (i) $S$ is a finite non-empty set of \emph{states}; 
	(ii) $Act$ is an indexed family $(Act_s)_{s\in S}$ of finite sets of \emph{actions} such that the sets $Act_s$ and $Act_{s'}$ of actions on $s$ and $s'$ are disjoint for any $s, s'\in S$; and (iii) $\prob$ is the transition probability function~$\prob(s, \alpha, \_) \in S\rightarrow [0, 1]\cap \QQ$ with finite support that satisfies $\sum_{s'\in S} \prob(s, \alpha, s') = 1$, for any $s\in S$ and $\alpha \in Act_s$. 
\end{definition}

We refer to the support of $\prob(s, \alpha, \_)$ by $\supp(s, \alpha)$.
We fix a \emph{target state}~$t$ and assume that~$t$ is a sink, i.e.,~$Act_t = \varnothing$.
We also assume that there is a distinguished sink state $s_{-} (\not = t)$.
\begin{example}
	\label{ex:mdp}
	We present an MDP $\mdp = (S, Act, \prob)$, where (i) $S = \{s_1, s_2, s_3,t\}$; (ii) $Act_{s_1} \coloneqq \{\alpha\}$, $Act_{s_2} \coloneqq \{\beta_1, \beta_2\}$, and $Act_{s_3} = Act_{t} \coloneqq \varnothing$; and (iii) $\prob$ is defined by 
	\begin{align*}
		&\prob(s_1, \alpha, s_2) \coloneqq 1/2,\quad \prob(s_1, \alpha, s_3) \coloneqq 1/6,\quad \prob(s_1, \alpha, t) \coloneqq 1/3,\\
		&\prob(s_2, \beta_1, s_1)\coloneqq 1/2, \quad \prob(s_2, \beta_1, t) \coloneqq 1/2, \\
		&\prob(s_2, \beta_2, s_1)\coloneqq 1/4, \quad \prob(s_2, \beta_2, s_2) \coloneqq 1/4, \quad \prob(s_2, \beta_2, t) \coloneqq 1/2. 
	\end{align*}
\end{example}

Given states~$s,s' \in S$,
a \emph{path} $\pi$ from $s$ to $s'$ is a sequence $\pi \coloneqq (s_1,\dots, s_m)$ such that $s_i\in S\setminus \{t\} $ for any $i\in [1, m-1]$, $s_1 = s$, and $s_m  = s'$.
We denote the set of paths from~$s$ to~$s'$ by $\pathset{s}{s'}$. 
A \emph{scheduler} is a function $\sigma\colon S^{+}\rightarrow \cup_{s\in S}Act_s$ such that $\sigma(s_1\cdots s_m)\in Act_{s_m}$. %
As deterministic schedulers suffice for the reachability objective~\cite{bk}, we further consider
the set~$\Sigma$ of all deterministic schedulers. 
A scheduler is \emph{positional} if for any $s_1\cdots s_m\cdot s$ and $s'_1\cdots s'_n\cdot s$, the actions $\sigma(s_1\cdots s_m\cdot s)$ and 
$\sigma(s'_1\cdots s'_n\cdot s)$ coincide. 
For a path $\pi \coloneqq (s_1,\dots, s_m)$ and a scheduler $\sigma \in \Sigma$, %
define $\prob^{\sigma}(\pi)\coloneqq \prod_{i\in [1, m-1]} \prob(s_i, \sigma(\pi_i), s_{i+1})$, where %
$\pi_i = (s_1,\dots, s_i)$. 
\begin{definition}[reachability probability]
	Given a scheduler $\sigma$, and a state $s\in S$, the \emph{reachability probability} $\prob^\sigma(s \models \lozenge t)$ under $\sigma$ is defined by 
	\[\prob^\sigma(s \models \lozenge t)\coloneqq \sum_{\pi \in \pathset{s}{t}} \prob^{\sigma}(\pi).\] 
	
	The \emph{maximum reachability probability} is defined as \[p^{\mathrm{max}}_s:=\sup_{\sigma\in \Sigma} \prob^\sigma(s \models \lozenge t) \in \QQ,\] 
	and the \emph{minimum reachability probability} is defined as \[p^{\mathrm{min}}_s:=\inf_{\sigma\in \Sigma} \prob^\sigma(s \models \lozenge t) \in \QQ.\] 
\end{definition}
We write~$\bm{p^{\max}}$ and $\bm{p^{\min}}$ for the vectors $(p^{\max}_s)_{s \in S\backslash \{t\}}$ and $(p^{\min}_s)_{s \in S\backslash \{t\}}$ indexed by states~$S\backslash \{t\}$, respectively.
The maximum and minimum reachability probabilities are in fact achievable by a positional scheduler. 
\begin{proposition}[e.g.~\cite{bk}] %
	\label{vi-positional}
	There exist optimal positional schedulers $\sigma^{\max}_\text{pos}, \sigma^{\min}_\text{pos} \in \Sigma$ such that
	\[\prob^{\sigma^{\max}_\text{pos}}(s \models \lozenge t) = p^{\max}_s \text{ and } \;
	\prob^{\sigma^{\min}_\text{pos}}(s \models \lozenge t) = p^{\min}_s\] hold for all~$s \in S$, respectively.
\end{proposition}
\emph{Value Iteration (VI)}~\cite{bk,Puterman94} is a standard technique to approximate the vector of optimal (maximum or minimum) reachability probabilities. 
Specifically, VI applies the \emph{Bellman operator} to the current approximation for each iteration step. 

Let $S_d = S \setminus \{t, s_{-}\}$. 
For each $s\in S_d$, we associate a polynomial of degree~1, \emph{a linear polynomial} of~$\alpha$, with each action $\alpha \in Act_s$. This polynomial $L_\alpha \in \QQ[\bm{x}]$ is defined as 
\[L_\alpha(\bm{x}) = \sum_{s' \in S_d} \prob(s, \alpha, s') x_{s'}  + \prob(s, \alpha, t).\]%
\begin{definition}[Bellman operator]
	The \emph{Bellman operator} $\Phi_{\max} \colon [0, 1]^d\rightarrow [0, 1]^d$ \emph{for the maximum reachability objective} is defined by \[\Phi_{\max}(\bm{x})_s \coloneqq \max_{\alpha\in Act_s} L_\alpha(\bm{x})\]
	for each $\bm{x} = (x_s)_{s \in S_d}\in [0, 1]^d$ and $s\in S_d$.

	The \emph{Bellman operator} $\Phi_{\min} \colon [0, 1]^d\rightarrow [0, 1]^d$ \emph{for the minimum reachability objective} is defined by \[\Phi_{\min}(\bm{x})_s \coloneqq \min_{\alpha\in Act_s} L_\alpha(\bm{x})\]
	for each $\bm{x} = (x_s)_{s \in S_d}\in [0, 1]^d$ and $s\in S_d$.
\end{definition}
\begin{example}
	The Bellman operator $\Phi_{\max}$ of the MDP given in~\cref{ex:mdp} is given by 
	\begin{align*}
		\Phi_{\max}(\bm{x})_{s_1} &\coloneqq 1/2\cdot x_2 + 1/3, \\
		\Phi_{\max}(\bm{x})_{s_2} &\coloneqq \max\big(1/2 \cdot x_1 + 1/2,\, 1/4\cdot x_1 + 1/4\cdot x_2 + 1/2 \big),
	\end{align*}
	where $S_d =\{s_1, s_2\}$. The Bellman operator $\Phi_{\max}$ is indeed the PAM $f$ given in~\cref{ex:intro}. 
	
	The Bellman operator~$\Phi_{\min}$ of the same MDP is given by
	\begin{align*}
		\Phi_{\min}(\bm{x})_{s_1} &\coloneqq 1/2\cdot x_2 + 1/3, \\
		\Phi_{\min}(\bm{x})_{s_2} &\coloneqq \min\big(1/2 \cdot x_1 + 1/2,\, 1/4\cdot x_1 + 1/4\cdot x_2 + 1/2 \big),
	\end{align*}
	where $S_d =\{s_1, s_2\}$.
\end{example}

By an abuse of notation, we write $Act_i$ for the set $Act_{s_i}$ of actions on $s_i\in S_d$, and the optimal reachability probability $p_i$ for $p_{s_i}$. 
We also restrict~$\bm{p^{\max}}$ to the vector over $S_d$ and write $\bm{p^{\max}} = (p_1, \dots, p_d)$ and same for $\bm{p^{\min}}$ when it is clear from the context.
Each action $\alpha \in Act_i$ is associated with a set $\suc(\alpha)$ of successor states defined as $\suc(\alpha):= \supp(s_i, \alpha) \cap S_d$. We emphasise that the successor set, together with transition probabilities~$\prob(s, \alpha, s')$, $s' \in \suc(\alpha)$, is a probabilistic \emph{sub}distribution.

We define the partial order $\leq$ on vectors in $\RR^d$ by $\bm{u}\leq \bm{v}$ if $u_i\leq v_i$ holds for each $i\in [1, d]$. 
We refer to vectors~$\bm{u},\bm{v} \in \RR^d$ as \emph{comparable} if either $\bm{u} \geq \bm{v}$ or $\bm{u} \leq \bm{v}$ holds. 
Otherwise, the vectors are incomparable, denoted~$\bm{u} \bowtie \bm{v}$.

Let $||\cdot||_\infty$ denote the \emph{$\ell^\infty$-norm}, or the \emph{$\max$-norm}, defined by \(||\bm{x}||_\infty := \max{\left(|x_1|, \dots, |x_d|\right)}\) for a vector~$\bm{x} = (x_1, \dots, x_d)$.
In the sequel, the notation $||\bm{x}||$ stands for $||\bm{x}||_\infty$.
We further define the $\ell^\infty$-metric, i.e., the distance between two vectors $\bm{x}$ and $\bm{y}$ is
\(d(\bm{x},\bm{y}):= \max_i{\left(|x_i - y_i|\right)}.\)

The set $[0, 1]^d$ is a complete lattice with the componentwise ordering and the Bellman operator is $\omega$-continuous, that is, it preserves joins of ascending sequences.
Due to this, the iterative updates by the Bellman operators~$\Phi_{\max}$ and $\Phi_{\min}$ from the bottom vector~$\bm{0} \coloneqq (0, \cdots, 0)\in [0, 1]^d$, which are called VI (from $\bm{0}$), converge to the least fixed points~$\mu\Phi_{\max}$ and $\mu\Phi_{\min}$, which are the optimal reachability probabilities $\bm{p^{\max}}$ and $\bm{p^{\min}}$, respectively. 
\begin{proposition}[\cite{bk,chatterjee2008vi,Puterman94}]
	The sequences \[\langle(\Phi_{\max})^n(\mathbf{0})\rangle_{n\in \NN} \; \text{ and } \; \langle(\Phi_{\min})^n(\mathbf{0})\rangle_{n\in \NN}\] are monotonically increasing and converge to the optimal reachability probabilities $\bm{p^{\max}}$ and $\bm{p^{\min}}$, respectively.  
\end{proposition}

Furthermore, under the assumption of no existence of \emph{end components}, the Bellman operators have unique fixed points, and from any initial vectors the iteration converges to unique fixed points~\cite{HaddadM18}. 
\begin{definition}[end component~\cite{Alfaro97}]
	Let $\mdp = (S, Act, \prob)$ be an MDP. 
	A pair $(S', Act')$ such that $\varnothing \not =S' \subseteq S$ and $ \varnothing \not = Act'\subseteq \cup_{s\in S'}Act_s $ is an \emph{end component}
	if (i) for all $s\in S'$ and $\alpha \in Act'\cap Act_s$, $\supp(s, \alpha) \subseteq S'$; and (ii) the directed graph that is induced by $(S', Act')$ is strongly connected.  
\end{definition}
\begin{proposition}[\cite{HaddadM18}]\label{all-converge}
	Assume $\Phi$ is a Bellman operator of an MDP with no end components. 
	The sequence $\langle\Phi^n(\bm{x})\rangle_{n\in \NN}$ converges to the unique fixed point $\mu \Phi$ from any initial vector $\bm{x}\in [0, 1]^d$. 
\end{proposition}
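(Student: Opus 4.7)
The plan is to sandwich the orbit of an arbitrary initial point between orbits starting at the extremal points $\bm{0}$ and $\bm{1}$ of the lattice $[0,1]^d$, and to exploit monotone convergence at both ends together with the uniqueness hypothesis. The ingredients I rely on are (a) the monotonicity of $\Phi$, (b) the continuity of $\Phi$ as the pointwise maximum of finitely many affine polynomials, and (c) the already established convergence $\Phi^n(\bm{0}) \nearrow \bm{p^*}$ from the preceding proposition.

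First, I would note that $\Phi$ is monotone in the componentwise order, since every $L_\alpha$ has non-negative coefficients and pointwise maxima preserve monotonicity. For any $\bm{x}\in [0,1]^d$ we have $\bm{0}\leq \bm{x}\leq \bm{1}$, and iterating monotonicity yields
\[
\Phi^n(\bm{0})\;\leq\;\Phi^n(\bm{x})\;\leq\;\Phi^n(\bm{1})
\]
for every $n\in\NN$.

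Next, I would analyse the descending sequence $\langle \Phi^n(\bm{1})\rangle_{n\in\NN}$. A direct calculation gives
\(
\Phi(\bm{1})_{s}=\max_{\alpha\in Act_{s}}\bigl(\prob(s,\alpha,S_d)+\prob(s,\alpha,t)\bigr)\leq 1,
\)
since restricting $\prob(s,\alpha,\cdot)$ to $S_d\cup\{t\}$ produces a subdistribution. Hence $\Phi(\bm{1})\leq\bm{1}$, and by monotonicity $\langle\Phi^n(\bm{1})\rangle$ is coordinatewise non-increasing and bounded below by $\bm{0}$, so it converges to some $\bm{q}\in [0,1]^d$. The continuity of $\Phi$ lets me pass to the limit in the identity $\Phi(\Phi^n(\bm{1}))=\Phi^{n+1}(\bm{1})$ to obtain $\Phi(\bm{q})=\bm{q}$, and the uniqueness assumption then forces $\bm{q}=\bm{p^*}$.

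Combining the sandwich bound with $\Phi^n(\bm{0})\nearrow\bm{p^*}$ and $\Phi^n(\bm{1})\searrow\bm{p^*}$ yields $\Phi^n(\bm{x})\to\bm{p^*}$ for every $\bm{x}\in [0,1]^d$. The main obstacle is the analysis of the descending orbit from $\bm{1}$: one must verify that $\Phi(\bm{1})\leq\bm{1}$ (which uses the subdistribution property), that the pointwise limit is a fixed point (which uses continuity of the piecewise affine map), and that this fixed point coincides with $\bm{p^*}$ (which is precisely where the uniqueness hypothesis is spent). Once the descending side is settled, the squeeze from both extremes finishes the argument.
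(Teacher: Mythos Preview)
Your proof is correct and is essentially the argument the paper has in mind: the paper does not spell out a proof but cites \cite{HaddadM18} and remarks that the claim follows from the Kleene fixed-point theorem, and later (in the proof of \cref{t-not-fp}) it uses precisely your sandwich $\Phi^n(\bm{0})\leq\Phi^n(\bm{x})\leq\Phi^n(\bm{1})$ together with the convergence of both endpoints to $\mu\Phi$---which is exactly the interval-iteration idea of \cite{HaddadM18}. Your proposal simply makes the dual Kleene step from $\bm{1}$ explicit (monotone decrease, continuity, and uniqueness to identify the limit with $\bm{p^*}$), so the approaches coincide.
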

In the sequel, we always assume that an MDP~$\mdp$ has no end components and hence its Bellman operators~$\Phi_{\max}$ and $\Phi_{\min}$ have unique fixed points.
Such MDPs remain expressive and exhibit highly non-trivial behaviors, making them an important subject of study in probabilistic verification; see e.g.~\cite{HaddadM18,BrazdilCCFKKPU14,QuatmannK18,hartmanns2020optimistic,Baier0L0W17}.
Notably, the presence of end components in a given MDP can be checked in P~\cite{Alfaro97}.
Moreover, \cite{HaddadM18}
describes reductions for $\Phi_{\max}$ and $\Phi_{\min}$ that, for an arbitrary MDP, constructs MDPs with the same least fixed points and without end components.
For $\Phi_{min}$, all maximal end components are simply merged into the distinguished sink state $s_{-}$. 
For $\Phi_{max}$, bottom maximal end components are merged into the distinguished sink state 
$s_{-}$, while each of the remaining maximal end components is shrunk into a single state, respectively.
These reductions never increase the dimension.
\subsection{Target problem}
We begin by recalling the value iteration algorithm.
As an iterative procedure, VI boils down to repeatedly applying the Bellman operator~$\Phi$ starting from a certain vector~$\bm{s} \in [0,1]^d$ (usually, $\bm{s} = \bm{0}$) 
and converging to the least fixed point~$\mu\Phi$---this becomes the unique fixed point in our setting by assuming the non-existence of end components.

Observe that every Bellman operator~$\Phi$
is indeed a piecewise affine map on the domain~$[0,1]^d$.
On close inspection, for every~$\bm{x} \in [0,1]^d$, the value of $\Phi^{\max}(\bm{x})$ is computed as the %
maximum of finitely many affine functions~$\phi_1, \dots, \phi_k : [0,1]^d \rightarrow [0,1]^d$ evaluated at~$\bm{x}$.
Let~$P_i \subseteq [0,1]^d$ be the set of points~$\bm{x} \in [0,1]^d$, for which %
$\phi_i(\bm{x}) \geq \phi_j(\bm{x})$ for each $j \neq i$.
Two observations are straightforward: 1) each~$P_i$ is defined by a conjunction of linear inequalities; 2) every~$\bm{x}$ belongs to at least one set~$P_i$.
Moreover, $\Phi_{\max}$ is a well-defined function,
which can be observed from $\bm{x} \in P_i \cap P_j$ implying $\phi_i(\bm{x}) = \phi_j(\bm{x})$.
Therefore, $\Phi_{\max}$ is a PAM on~$[0,1]^d$. By almost same argument, we can see that $\Phi_{\min}$ is also a piecewise affine map on the domain $[0,1]^d$.

In this paper, we investigate the specialisation of the piecewise affine reachability problem to Bellman operators.
\begin{mdframed}
\textbf{Problem BOR (Bellman Operator Reachability).}
Let $\bm{s}, \bm{t} \in [0,1]^d \cap \QQ^d$ and $\Phi: [0,1]^d \rightarrow [0,1]^d$ be a Bellman operator of an MDP~$\mdp$ with no end components. 
Does there exist~$n \in \NN$ such that $\Phi^n(\bm{s}) = \bm{t}?$
\end{mdframed}
We refer to the version of this problem with $\Phi = \Phi_{\max}$ as $\max$-BOR. Similarly, in $\min$-BOR we consider  $\Phi = \Phi_{\min}$.

\section{Bellman Operator Reachability in Arbitrary Dimension}
\label{sec:efs}
In this section, we will discuss the BOR problem without restricting the dimension~$d$.
Recall that for the BOR problem, a Bellman operator~$\Phi \in \{\Phi_{\max}, \Phi_{\min}\}$ has a unique fixed point~$\mu\Phi$.
It will be instrumental to split the discussion of decidability depending on
how the initial and target vectors $\bm{s}$ and $\bm{t}$ compare to~$\mu\Phi$ with respect to the componentwise order on~$\QQ^d$.

We first discuss the properties of the sequence of the iterates
\[\bm{s}, \Phi(\bm{s}), \Phi^2(\bm{s}), \dots\]
which are common for all~$\bm{s} \in [0,1]^d$. 
Based on them, we show that BOR is decidable when $\bm{t}$ is not the unique fixed point~$\mu\Phi$. 
\begin{lemma}\label{Phi-lip-like}
	Let $\Phi \in \{\Phi_{\max},\Phi_{\min}\}$ be a Bellman operator~$\Phi :[0, 1]^d\rightarrow [0, 1]^d$.
	Consider an arbitrary vector~$\bm{x} \in [0,1]^d$ and let
	$\delta:= ||\bm{x} - \mu\Phi||$.
	We have $||\Phi(\bm{x}) - \mu\Phi|| \leq \delta$.
\end{lemma}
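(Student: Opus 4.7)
The plan is to exploit two structural features of the Bellman operator: $\mu\Phi$ is a fixed point of $\Phi$, and each linear polynomial $L_\alpha$ has non-negative coefficients summing to at most $1$ (sub-stochasticity). Combined, these let us show that $\Phi$ is non-expansive about $\mu\Phi$ in the $\ell^\infty$-norm.

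First I would rewrite the quantity of interest by noting that, since $\Phi(\mu\Phi) = \mu\Phi$, for each coordinate $s\in S_d$ we have
\[
\bigl|\Phi(\bm{x})_s - (\mu\Phi)_s\bigr| \;=\; \bigl|\Phi(\bm{x})_s - \Phi(\mu\Phi)_s\bigr| \;=\; \Bigl|\max_{\alpha\in Act_s} L_\alpha(\bm{x}) \;-\; \max_{\alpha\in Act_s} L_\alpha(\mu\Phi)\Bigr|.
\]
Using the elementary inequality $|\max_i a_i - \max_i b_i| \leq \max_i |a_i - b_i|$, this is bounded by $\max_{\alpha\in Act_s}|L_\alpha(\bm{x}) - L_\alpha(\mu\Phi)|$, which reduces the problem to a per-action estimate on the affine maps $L_\alpha$.

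The core step is then a one-line calculation for a fixed action $\alpha\in Act_s$. Because the constant term $\prob(s,\alpha,t)$ cancels,
\[
|L_\alpha(\bm{x}) - L_\alpha(\mu\Phi)| \;=\; \Bigl|\sum_{s'\in S_d}\prob(s,\alpha,s')\bigl(x_{s'} - (\mu\Phi)_{s'}\bigr)\Bigr| \;\leq\; \sum_{s'\in S_d}\prob(s,\alpha,s')\,\delta \;\leq\; \delta,
\]
where the first inequality uses $|x_{s'}-(\mu\Phi)_{s'}| \leq \delta$ and non-negativity of the coefficients, and the second uses $\sum_{s'\in S_d}\prob(s,\alpha,s') \leq 1$ (the remaining mass going to the sink $t$). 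Taking the maximum over $\alpha$ and then over $s$ yields $\|\Phi(\bm{x}) - \mu\Phi\| \leq \delta$.

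There is no genuine obstacle here; the statement is essentially the standard fact that Bellman operators for reachability MDPs are non-expansive in $\ell^\infty$. The only point warranting care is the sub-stochasticity step, since $S_d$ excludes the sink $t$ and the non-positive states, so the per-row sum of coefficients need not equal $1$; fortunately we only need it to be at most $1$, which holds because $\prob(s,\alpha,\cdot)$ is a probability distribution on $S$.
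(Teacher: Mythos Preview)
Your proof is correct. The approach is essentially the same as the paper's---both reduce to the sub-stochasticity of the coefficients of each $L_\alpha$---but the packaging differs slightly. The paper argues via monotonicity: from $\bm{x}\le\mu\Phi+\bm{\delta}$ it deduces $\Phi(\bm{x})\le\Phi(\mu\Phi+\bm{\delta})\le\mu\Phi+\bm{\delta}$ (and tacitly leaves the symmetric lower bound to the reader). You instead invoke the elementary inequality $|\max_i a_i-\max_i b_i|\le\max_i|a_i-b_i|$, which handles both directions at once and makes the $1$-Lipschitz nature of $\Phi$ in $\ell^\infty$ explicit. Your route is marginally cleaner for that reason; neither adds anything the other lacks conceptually.
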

\begin{proof}
	For both $\Phi = \Phi_{\max}$ and $\Phi = \Phi_{\min}$ we need to prove the inequalities
	\begin{equation} \label{eq:Phi-not-too-high}
		\Phi(\bm{x})_i \leq \mu\Phi_i + \delta
	\end{equation} and 
\begin{equation}\label{eq:Phi-not-too-low}
	\Phi(\bm{x})_i \geq \mu\Phi_i - \delta
\end{equation}
for all $1\leq i \leq d$ under the assumption $||\bm{x} - \mu\Phi|| = \delta$.
In particular, $\bm{x} \leq \mu\Phi + \bm{\delta}$, where $\bm{\delta} := (\delta, \dots, \delta)$.
For any action~$\alpha$ in state~$i$, we have
\begin{align}\label{eq:act-contract}
	L_{\alpha} (\mu\Phi + \bm{\delta})={} 
	&\sum_{j \in S_d} \prob(i,\alpha, j) \cdot (\mu\Phi_j + \delta) + \prob(i, \alpha, t)\nonumber \\
	={}&\sum_{j \in S_d} \prob(i,\alpha, j) \cdot \mu\Phi_j + \prob(i, \alpha, t) + \sum_{j \in S_d} \prob(i,\alpha, j) \cdot \delta  \nonumber \\
	={}&L_{\alpha}(\mu\Phi) +  \sum_{j \in S_d} \prob(i,\alpha, j) \cdot \delta \leq 
	L_{\alpha}(\mu\Phi) + \delta.
\end{align}

We first consider the operator~$\Phi_{\max}$.
Let \[\tilde{\alpha} := \argmax_{\alpha \in Act_i} L_\alpha(\mu\Phi_{\max} + \bm{\delta}).\] 
By~\eqref{eq:act-contract}, $(\Phi_{\max}(\mu\Phi_{\max} + \bm{\delta}))_i = L_{\tilde{\alpha}}(\mu\Phi_{\max} + \bm{\delta}) \leq L_{\tilde{\alpha}}(\mu\Phi_{\max}) + \delta \leq (\mu\Phi_{\max})_i + \delta$.
From the monotonicity of~$\Phi_{\max}$, we conclude \[(\Phi_{\max}(\bm{x}))_i \leq (\Phi_{\max}(\mu\Phi_{\max} + \bm{\delta}))_i \leq (\mu\Phi_{\max})_i + \delta.\]

Now, for the operator~$\Phi_{\min}$ consider an action~$\alpha'$ such that $L_{\alpha'}(\mu\Phi_{\min}) = (\mu\Phi_{\min})_i$. 
Then
\[(\Phi_{\min}(\mu\Phi_{\min}+\bm{\delta}))_i \leq L_{\alpha'} (\mu\Phi_{\min} + \bm{\delta}) \leq
L_{\alpha'}(\mu\Phi_{\min}) + \delta = (\mu\Phi_{\min})_i + \delta,\]
where the second inequality follows from~\eqref{eq:act-contract}.
By monotonicity, \[(\Phi_{\min}(\bm{x}))_i \leq (\Phi_{\min}(\mu\Phi_{\min}+\bm{\delta}))_i \leq (\mu\Phi_{\min})_i + \delta.\]
This shows~\cref{eq:Phi-not-too-high} for both $\Phi = \Phi_{\max}$ and $\Phi = \Phi_{\min}$.

Now we are going to use~$\mu\Phi - \bm{\delta} \leq \bm{x}$ in order to prove~\cref{eq:Phi-not-too-low}. We note
\begin{align}\label{eq:act-contract-2}
	L_{\alpha} (\mu\Phi - \bm{\delta})={} 
	&\sum_{j \in S_d} \prob(i,\alpha, j) \cdot (\mu\Phi_j - \delta) + \prob(i, \alpha, t)\nonumber \\
	={}&\sum_{j \in S_d} \prob(i,\alpha, j) \cdot \mu\Phi_j + \prob(i, \alpha, t) - \sum_{j \in S_d} \prob(i,\alpha, j) \cdot \delta \nonumber  \\
	={}&L_{\alpha}(\mu\Phi) - \sum_{j \in S_d} \prob(i,\alpha, j) \cdot \delta \geq 
	L_{\alpha}(\mu\Phi) - \delta.
\end{align}

We consider~$\Phi_{\max}$ first. Take
some action~$\alpha'$ in state~$i$ such that $L_{\alpha'}(\mu\Phi_{\max}) = (\mu\Phi_{\max})_i$. By~\eqref{eq:act-contract-2} we have
\[(\Phi_{\max}(\mu\Phi_{\max}-\bm{\delta}))_i \geq L_{\alpha'} (\mu\Phi_{\max} - \bm{\delta}) \geq
L_{\alpha'}(\mu\Phi_{\max}) - \delta = (\mu\Phi_{\max})_i - \delta.\]
By the monotonicity, $(\Phi_{\max}(\bm{x}))_i \geq (\Phi_{\max}(\mu\Phi_{\max}-\bm{\delta}))_i \geq (\mu\Phi_{\max})_i - \delta$.

For~$\Phi_{\min}$ let \[\tilde{\alpha} := \argmin_{\alpha \in Act_i} L_\alpha(\mu\Phi_{\min}  - \bm{\delta}).\] 
We get
\begin{align*}
	(\Phi_{\min}(\bm{x}))_i &\geq (\Phi_{\min}(\mu\Phi_{\min} - \bm{\delta}))_i = L_{\tilde{\alpha}}(\mu\Phi_{\min}  - \bm{\delta}) \\
&\geq L_{\tilde{\alpha}}(\mu\Phi_{\min}) - \delta \geq (\mu\Phi_{\min})_i - \delta.
\end{align*}
This shows~\cref{eq:Phi-not-too-low} for both~$\Phi \in \{\Phi_{\max},\Phi_{\min}\}$ and concludes the proof.
\end{proof}

\begin{proposition}\label{t-not-fp}
	Let $\Phi \in \{\Phi_{\max}, \Phi_{\min}\}$
	be a Bellman operator
	$\Phi : [0, 1]^d\rightarrow [0, 1]^d$
	of an MDP with no end components, and $\bm{s} \in [0,1]^d$ be an arbitrary initial vector.
	For every $\bm{t} \in [0,1]^d$ with $\bm{t} \neq \mu\Phi$, 
	there exists an effectively computable bound~$N$ such that
	\[\Phi^n(\bm{s}) = \bm{t} \quad \Rightarrow \quad \Phi^n(\bm{s}) = \bm{t} \text{ for some } n\leq N.\]
\end{proposition}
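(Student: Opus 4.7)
The plan is to combine the non-expansiveness from Lemma \ref{Phi-lip-like} with an effective convergence rate of value iteration to bound the time after which $\Phi^n(\bm{s})$ can no longer equal $\bm{t}$.

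First, set $\eta := ||\bm{t} - \mu\Phi||$; since $\bm{t} \neq \mu\Phi$ and both vectors lie in $\QQ^d$, $\eta$ is a strictly positive rational computable from the input (recalling that $\mu\Phi = \bm{p^*}$ can be obtained by linear programming). Applying Lemma \ref{Phi-lip-like} inductively yields that the sequence $\bigl(||\Phi^n(\bm{s}) - \mu\Phi||\bigr)_{n \in \NN}$ is non-increasing. Hence, as soon as some iterate satisfies $||\Phi^{n_0}(\bm{s}) - \mu\Phi|| < \eta$, every subsequent iterate also lies strictly within distance $\eta$ of $\mu\Phi$ and therefore differs from $\bm{t}$. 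Proposition \ref{all-converge} ensures the existence of such an $n_0$, but only non-effectively.

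The core challenge is turning $n_0$ into an effectively computable $N$. My plan is to invoke the contraction property of $\Phi$ that follows from the standing assumption of a unique fixed point, equivalently the absence of end components in the underlying MDP. A standard argument (see e.g.~\cite{HaddadM18}) shows that from any state every positional scheduler reaches the target with probability at least $p_{\min}^{|S_d|}$ within $|S_d|$ steps, where $p_{\min}$ is the minimal positive transition probability of $\mdp$. From this one extracts effective constants $c \in (0,1)$ and $k \in \NN$ such that $||\Phi^{kn}(\bm{x}) - \mu\Phi|| \leq c^n \cdot ||\bm{x} - \mu\Phi||$ for every $\bm{x} \in [0,1]^d$. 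Setting $N := k \cdot \lceil \log_{1/c}(||\bm{s} - \mu\Phi||/\eta) \rceil + 1$ then guarantees $||\Phi^N(\bm{s}) - \mu\Phi|| < \eta$, and the argument from the previous paragraph completes the proof by ruling out equality to $\bm{t}$ at all indices $n \geq N$.

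The main obstacle is thus purely a matter of citing and instantiating the right quantitative convergence bound for value iteration on MDPs without end components; once that is in place, the reachability decision reduces to verifying whether $\Phi^n(\bm{s}) = \bm{t}$ for some $n < N$, which is a bounded search. A minor care point is that the bound must be stated in terms of the bit sizes of the inputs $\bm{s}$, $\bm{t}$, and $\Phi$, since $\eta$, $p_{\min}$, and $||\bm{s} - \mu\Phi||$ are all rational numbers whose encodings control $N$.
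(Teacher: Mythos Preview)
Your proposal is correct and follows essentially the same route as the paper: set $\eta = \|\bm{t} - \mu\Phi\|$, extract an effective convergence bound from~\cite{HaddadM18} guaranteeing $\|\Phi^N(\bm{s}) - \mu\Phi\| < \eta$, and then apply Lemma~\ref{Phi-lip-like} to rule out $\bm{t}$ for all $n \ge N$. The one difference is in how the effective bound is derived: rather than asserting a direct $k$-step contraction $\|\Phi^{kn}(\bm{x}) - \mu\Phi\| \le c^n \|\bm{x} - \mu\Phi\|$, the paper uses monotonicity to sandwich both $\Phi^N(\bm{s})$ and $\mu\Phi$ between $\Phi^N(\bm{0})$ and $\Phi^N(\bm{1})$ and then cites~\cite[Theorem~2]{HaddadM18} for $\|\Phi^N(\bm{0}) - \Phi^N(\bm{1})\| < \eta$; this sidesteps the need to argue contraction for the $\max$-operator directly. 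One small imprecision in your justification: absence of end components ensures every scheduler \emph{leaves $S_d$} (possibly to a zero-probability sink, not necessarily to $t$) with probability at least $p_{\min}^{|S_d|}$ in $|S_d|$ steps, and it is this weaker statement that yields the substochastic contraction you need.
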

\begin{proof}
	Fix vectors~$\bm{s}, \bm{t} \in [0,1]^d$ assuming $\bm{t} \neq \mu\Phi$, where $\mu\Phi$ is the unique fixed point of $\Phi$.
	We crucially use the convergence properties of the interval iteration algorithm~\cite{HaddadM18}.
	Let $\bm{1}:= (1, \dots, 1)$ be the greatest element of the lattice~$[0, 1]^d$. 
	Choose a convergence threshold~$\varepsilon > 0$.
	From \cite[Theorem~2]{HaddadM18} we have $||\Phi^N(\bm{0}) - \Phi^N(\bm{1})|| < \varepsilon$ for some $N \leq A \lceil \frac{\log{\varepsilon}}{\log(1-B^A)} \rceil$,
	where constants~$A, B$ only depend on the MDP and can be computed directly from its representation.
	
	From the monotonicity of $\Phi$ we have $\Phi^N(\bm{0}) \leq \Phi^N(\bm{s}) \leq \Phi^N(\bm{1})$ for any~$\bm{s} \in [0,1]^d$.
	We also have $\Phi^N(\bm{0}) \leq \mu\Phi \leq \Phi^N(\bm{1})$, 
	and so %
	$||\Phi^N(\bm{s}) - \mu\Phi|| < \varepsilon$.

Recall that we can compute the vector~$\mu\Phi$ exactly.
Now choose the threshold $\varepsilon := ||\bm{t} - \mu\Phi||$, and let $N_\varepsilon$ be the previously discussed bound for this threshold.
We have $||\Phi^{N_\varepsilon}(\bm{s}) - \mu\Phi|| <  ||\bm{t} - \mu\Phi||$  and hence, from~\cref{Phi-lip-like},
\[||\Phi^n(\bm{s}) - \mu\Phi|| <  ||\bm{t} - \mu\Phi||\]
for every~$n \geq N_\varepsilon$.
Therefore, either $\bm{t} = \Phi^n(\bm{s})$ for some $n < N_\varepsilon$, or $\bm{t}$ is not reachable from $\bm{s}$ under iteratively applying~$\Phi$.
The first condition can be checked in finite time %
since $N_{\varepsilon}$ is effectively bounded.
\end{proof}

From \cref{t-not-fp} we immediately derive an algorithmic procedure for the BOR problem instances $(\Phi, \bm{s}, \bm{t})$, where $\bm{t} \neq \mu\Phi$. 
For an instance like this, it suffices to compute the bound~$N$ as above, and to test whether $\Phi^n(\bm{s})$ is equal to $\bm{t}$ for some~$n\leq N$ .

We now move on to the case $\bm{t} = \mu\Phi$.
In the sequel, it will be important to differentiate between two types of actions. These types are defined based on preserving the probabilities of the unique fixed point~$\bm{t} = (t_1, \dots, t_d)$.

\begin{definition}
Let $\Phi \in \{\Phi_{\max},\Phi_{\min}\}$ and let $(t_1, \dots, t_d) = \mu\Phi$. 
An action~$\alpha$ available in state~$i$ is \emph{tight}, if $t_i = L_\alpha(t_1, \dots, t_d)$, where $L_\alpha$ is the linear polynomial of action $\alpha$.
An action $\alpha$ is \emph{leaking} in state~$i$, if it is not tight.
\end{definition}

We consider four scenarios next. Along with the distinction between $\Phi= \Phi_{\max}$ and $\Phi=\Phi_{\min}$, we look into the assumptions $\bm{s} \geq \bm{t}$ and $\bm{s} \leq \bm{t}$.

The notion of the sign abstraction will be instrumental in the discussion of the comparable case.
Let $f:[0,1]^d \rightarrow \{-1,0,1\}^d$ be defined by associating 
a sign vector~$\eps = f(\bm{x}) = (\varepsilon_1, \dots, \varepsilon_d)$ 
with every vector~$\bm{x} \in [0,1]^d$, where 
\[\varepsilon_i = \begin{cases}
	1, & x_i > t_i,\\
	0, & x_i = t_i,\\
	-1, & x_i < t_i.
\end{cases}\]
According to this definition, $f(\bm{t}) = \bm{0}$.

\subsection{The cases $\bm{s} \leq \mu\Phi_{\max}$ and $\mu\Phi_{\min} \leq \bm{s} $} \label{sec:below}
In this subsection, we consider two sequences of iterates
\begin{equation}
	\label{max-below}
	\bm{s}, \Phi_{\max}(\bm{s}), \Phi^2_{\max}(\bm{s}), \; \dots, \qquad
	\text{where } \bm{s} \leq \mu\Phi_{\max};
\end{equation}
\begin{equation}
	\label{min-above}
	\bm{s}, \Phi_{\min}(\bm{s}), \Phi^2_{\min}(\bm{s}), \; \dots, \qquad 
	\text{where } \mu\Phi_{\min} \leq \bm{s}.
\end{equation}
Note that by monotonicity,
\begin{enumerate}
	\item If $\bm{s} \leq \mu\Phi_{\max}$, then $\Phi^n_{\max}(\bm{s}) \leq \mu\Phi_{\max}$ for all~$n \geq 0$;
	\item If $\mu\Phi_{\min} \leq \bm{s}$, then $\mu\Phi_{\min} \leq \Phi^n_{\min}(\bm{s})$ for all~$n \geq 0$.
\end{enumerate}

The scenarios outlined in~\cref{max-below,min-above} are discussed together 
--- for them, the sign abstraction~$f$ is well-defined.
Intuitively, we can abstract away from the actual probabilities in the vectors $\Phi^n(\bm{s})$, $n \geq 0$, where $\Phi \in \{\Phi_{\min}, \Phi_{\max}\}$. This succeeds by \emph{only keeping track of whether these probabilities are different from the probabilities in~$\bm{t}$}.
As a result, the space of the BOR problem becomes finite.

We now prove formally that the successor of $f(\bm{x})$ with respect to the Bellman operator is well-defined in the two cases we discuss here.
This would allow to only consider the evolution of the sign vectors later on.

\begin{lemma}\label{phi-below}
Let~$\bm{x}$ and $\bm{y}$ be two vectors satisfying $\bm{x} \leq \bm{t}$ and $\bm{y} \leq \bm{t}$. 
Provided $f(\bm{x}) = f(\bm{y})$, we have
\(f(\Phi_{\max}(\bm{x})) = f(\Phi_{\max}(\bm{y})).\)
\end{lemma}
\begin{proof}
	We consider $\bm{z} \in \{\bm{x},\bm{y}\}$ and let $\eps = (\varepsilon_1, \dots, \varepsilon_d)$ be the sign abstraction of~$\bm{z}$, that is, $\eps = f(\bm{z})$.
	Further, let $\eps' = (\varepsilon'_1, \dots, \varepsilon'_d)$ be the abstraction of~$\Phi_{\max}(\bm{z})$, that is, $\eps' = f(\Phi_{\max}(\bm{z}))$.
	
	Observe first that for a leaking action~$\beta$ chosen in state~$i$ at~$\bm{z}$, we have $L_\beta(\bm{z}) \leq L_\beta(\bm{t}) < t_i$.
Second, let~$\alpha$ be any action in state~$i$. If there exists $j \in \suc(\alpha)$ such that $\varepsilon_j < 0$,
then 
\[L_\alpha(\bm{z}) \leq L_{\alpha}(t_1, \dots, t_{j-1}, z_j, t_{j+1}, \dots, t_d) < L_{\alpha}(\bm{t}) \leq t_i.\]
Therefore, if  $\varepsilon_j < 0$ for some $j \in \suc(\alpha)$, then
$\epsilon'_i < 0$.
Putting it together, we have $\Phi_{\max}(\bm{z})_i = t_i$ if and only if
there exists a tight action~$\alpha \in Act_i$ 
such that for each~$j \in \suc{\alpha}$, we have $\varepsilon_j = 0$.

We summarise these observations as 
\begin{equation}\label{eq:phi-below}
	\varepsilon'_i = \max_{\substack{\alpha \in Act_i \\ \alpha \text{ tight}}} \; \min_{j \in \suc(\alpha)} \varepsilon_j,
\end{equation}
and thus make sure that $\varepsilon'_i$ does not depend on the actual values in~$\bm{x}$ and $\bm{y}$, as long as those two vectors have the same sign abstraction.
\end{proof}

\begin{lemma}\label{psi-above}
	Let~$\bm{x}$ and $\bm{y}$ be two vectors satisfying $\bm{t}\leq \bm{x}$ and $\bm{t}\leq \bm{y}$. 
	Provided $f(\bm{x}) = f(\bm{y})$, we have
	\(f(\Phi_{\min}(\bm{x})) = f(\Phi_{\min}(\bm{y})).\)
\end{lemma}
\begin{proof}
	The proof is analogous to that of~\cref{phi-below}.
	For a vector $\bm{z} \in \{\bm{x}, \bm{y}\}$ and its sign abstraction $\eps = f(\bm{z})$, 
	we note that taking a leaking action~$\beta$ implies $L_\beta(\bm{z}) > t_i$.
	Furthermore, for any action~$\alpha$,
	if  $\varepsilon_j > 0$ for some $j \in \suc(\alpha)$, then
	$\epsilon'_i > 0$.
	This yields	
	\begin{equation}\label{eq:psi-above}
		\varepsilon'_i = \min_{\substack{\alpha \in Act_i \\ \alpha \text{ tight}}} \; \max_{j \in \suc(\alpha)} \varepsilon_j,
	\end{equation}	
or $\Phi_{\min}(\bm{z})_i = t_i$ if and only if
there exists a tight action~$\alpha \in Act_i$ 
such that for each~$j \in \suc{\alpha}$, we have $\epsilon_j = 0$.
\end{proof}

\begin{proposition}\label{below-decid}
There exists an algorithmic procedure for the $\max$-BOR problem instances $(\Phi_{\max}, \bm{s}, \bm{t})$ with $\bm{s} \leq \mu\Phi_{\max}$ and $\bm{t} = \mu\Phi_{\max}$,
as well as for the $\min$-BOR problem instances
 $(\Phi_{\min}, \bm{s}, \bm{t})$ with $\mu\Phi_{\min}\leq \bm{s}$ and $\bm{t} = \mu\Phi_{\min}$.
\end{proposition}
\begin{proof}
It is easy to observe that the space of possible sign vectors is finite.
If we consider $\bm{s} \leq \mu\Phi_{\max}$, then the space is $\{-1,0\}^d$, whereas for the case $\mu\Phi_{\min} \leq \bm{s}$, it is $\{0,1\}^d$.

Given~$\bm{s}$, we compute the abstraction~$f(\bm{s})$ and ask whether $\bm{0}$ is reached by iteratively applying the map $\eps \mapsto \eps'$ defined by~\eqref{eq:phi-below} and~\eqref{eq:psi-above}, respectively.
Once an already explored vector occurs in the sequence $\langle\eps = f(\bm{s}), \eps', \eps'', \dots \rangle$, 
we can stop.
This happens in at most~$2^d -1$ iterations.
In this finite sequence, $\bm{0}$ occurs if and only if~$\bm{t}$ is reached by iterating~$\Phi \in \{\Phi_{\max}, \Phi_{\min}\}$ starting from~$\bm{s}$.
This is due to~\cref{phi-below,psi-above}.
\end{proof}

\subsection{The cases $\mu\Phi_{\max} \leq \bm{s}$ and $\bm{s} \leq \mu\Phi_{\min}$} \label{sec:above}

In this subsection, we consider two sequences of iterates
\begin{equation}
	\label{max-above}
		\bm{s}, \Phi_{\max}(\bm{s}), \Phi^2_{\max}(\bm{s}), \; \dots, \qquad
	\text{where } \mu\Phi_{\max} \leq \bm{s};
\end{equation}
\begin{equation}
	\label{min-below}
		\bm{s}, \Phi_{\min}(\bm{s}), \Phi^2_{\min}(\bm{s}), \; \dots, \qquad 
	\text{where } \bm{s} \leq \mu\Phi_{\min}.
\end{equation}
By monotonicity,
\begin{enumerate}
	\item If $\mu\Phi_{\max} \leq \bm{s}$, then $\mu\Phi_{\max} \leq \Phi^n_{\max}(\bm{s})$ for all~$n \geq 0$;
	\item If $\bm{s} \leq \mu\Phi_{\min}$, then $\Phi^n_{\min}(\bm{s}) \leq \mu\Phi_{\min}$ for all~$n \geq 0$.
\end{enumerate}

The main result of this subsection is the following proposition.
\begin{proposition}
\label{above-decid}

There exists an algorithmic procedure for the $\max$-BOR problem instances $(\Phi_{\max}, \bm{s}, \bm{t})$ with $ \mu\Phi_{\max} \leq \bm{s}$ and $\bm{t} = \mu\Phi_{\max}$,
as well as for the $\min$-BOR problem instances
$(\Phi_{\min}, \bm{s}, \bm{t})$ with $\bm{s} \leq \mu\Phi_{\min}$ and $\bm{t} = \mu\Phi_{\min}$.
\end{proposition}
The procedure for this case is more intricate than in~\cref{sec:below}.
This is due to the new phenomenon that occurs for the sequences~\eqref{max-above} and \eqref{min-below}.
While the value $f(\Phi(\bm{x}))$ is determined by tight actions in~\cref{sec:below}, it is not the case for the sequences~\eqref{max-above} and \eqref{min-below}. 
An iteration of the Bellman operator can choose a leaking action~$\beta \in Act_i$ over all tight actions available in state~$s_i$. 
Moreover, this can determine the value~$f(\Phi(\bm{x}))$. 
Consider~\eqref{max-above}: 
intuitively, a leaking action~$\beta$ is chosen if the successor states~$\suc(\beta)$ have probabilities significantly greater than optimal---enough to compensate for the ``leakage''~$t_i - L_\beta(\bm{t})$.

\begin{figure}[t]
\begin{minipage}[b]{0.33\columnwidth}
	\begin{center}
		\begin{tikzpicture}[on grid,auto]
			\node[state] (s0) {$\mathstrut s_1$};
			\node[dot] (n0) [below right=0.4 and 1 of s0] {};
			\node[state] (s1) [below right=0 and 1.5 of s0] {$\mathstrut s_2$};
			\node[state] (s2) [below right=0 and 1.5 of s1] {$\mathstrut s_3$};
			\node[state, accepting] (sp) [below left =1.25 and 0 of s1] {$t$};
			\node[] (me) [above left=0.5 and 0.5 of s0] {\small$\mdp_1$:};
			;
			\path[->]
			(s0) edge [bend left] node[pos=0.5,inner sep=1pt] {$\beta$, 1/2} (s1)
			(s0) edge [bend right] node[swap,pos=0.7,inner sep=2pt] {$\alpha$} (n0)
			(n0) edge [bend right] node[pos=0.7,inner sep=1pt] {1/2} (sp)
			(s0) edge [bend right] node[swap,inner sep=1pt] {$\beta$, 5/12} (sp)
			(n0) edge [bend right] node[pos=0.6,inner sep=1pt] {1/3} (s2)
			(s1) edge [bend left] node[pos=0.5,inner sep=1pt] {1} (s2)
			(s2) edge [bend left] node[pos=0.5,inner sep=1pt] {1/4} (sp)
			;
		\end{tikzpicture}
	\end{center}
	\caption{An MDP $\mdp_1$.}
	\label{fig:mdpLeakingAbove}
\end{minipage}
\hfill
\begin{minipage}[b]{0.5\columnwidth}
	\begin{center}
		\begin{tikzpicture}[on grid,auto]
			\node[state] (s1) {$\mathstrut s_1$};
			\node[dot] (n1) [above right= 0 and 1 of s1] {};
			\node[dot] (n2) [above right= 0 and -1 of s1] {};
			\node[state] (s2) [above right = 1 and 2 of s1] {$\mathstrut s_2$};
			\node[state] (s3) [above right = -1 and 2 of s1] {$\mathstrut s_3$};
			\node[state, accepting] (s4) [above right = -1 and 1 of s2] {$\mathstrut t$};
			\node[] (me) [above left=1 and 2.5 of n1] {\small$\mdp_2$:};
			\path[->]
			(s1) edge [bend right] node[swap,pos=0.4,inner sep=1pt] {$\alpha_1$} (n1)
			(s1) edge [bend right] node[swap,pos=0.4,inner sep=1pt] {$\alpha_2$} (n2)
			(n1) edge [bend right] node[swap,pos=0.6,inner sep=1pt] {1/3} (s1)
			(n2) edge [bend right] node[swap,pos=0.6,inner sep=1pt] {1/2} (s1)
			(n1) edge [bend right] node[pos=0.6,inner sep=1pt] {1/3} (s2)
			(n1) edge [bend left] node[swap,pos=0.6,inner sep=1pt] {1/3} (s3)
			(n2) edge [out = 85, in = 140] node[pos=0.6,inner sep=1pt] {1/4} (s2)
			(n2) edge [out = 275, in = 220] node[swap,pos=0.6,inner sep=1pt] {1/4} (s3)
			(s2) edge [bend left] node[pos=0.6,inner sep=1pt] {$\beta$, 1/6} (s4)
			(s3) edge [bend right] node[swap,pos=0.5,inner sep=1pt] {$\gamma$, 1/6} (s4)
			(s2) edge [bend right] node[swap,pos=0.6,inner sep=1pt] {$\beta$, 1/3} (s1)
			(s3) edge [bend left] node[pos=0.6,inner sep=1pt] {$\gamma$, 1/3} (s1)
			(s2) edge [loop,out=60,in=-0,looseness=4] node {$\beta$, 1/3} (s2)
			(s3) edge [loop,out=0,in=-60,looseness=4] node {$\gamma$, 1/3} (s3)
			;
		\end{tikzpicture}
		\caption{An MDP $\mdp_2$.}
		\label{fig:mdpNonPositional}
	\end{center}
\end{minipage}
\end{figure}
\begin{example}
Consider the MDP $\mdp_1$ in~\cref{fig:mdpLeakingAbove}. 
It has~$S = \{s_1, s_2, s_3, s_{-}, t\}$ where the ``missing'' probabilistic transitions lead to~$s_{-}$.
We omit~$s_{-} \not\in S_d$ and transitions to it, for simplicity of presentation.

With respect to the max-operator~$\Phi_{\max}$ and $\bm{t} = \mu\Phi_{\max}$, there is one tight action $\alpha$ and one leaking action $\beta$ in~$s_1$. 
This example shows that it is possible to have $L_\beta(\bm{x}) > t_1$ whereas $L_{\alpha}(\bm{x}) \leq t_1$ holds for all tight~$\alpha \in Act_1$.
This is different from the case of $\Phi_{\max}$ and $\bm{x} \leq \bm{t}$ --- cf.\ \cref{eq:phi-below}.

Let $\bm{s} \coloneq (1, 1/3, 2/3)$. Clearly, $\bm{s} > \mu\Phi_{\max} = (7/12, 1/4, 1/4)$. 
The tight action $\alpha$ is chosen for the first iteration, and $\Phi_{\max}(\bm{s}) = \left(13/18, 2/3, 1/4\right)$. 
Observe how $L_\alpha(\Phi_{\max}(\bm{s})) = t_1$ but $L_\beta(\Phi_{\max}(\bm{s})) > t_1$, and hence the latter is chosen by~$\Phi_{\max}$ in this step. 
That is, the leaking action $\beta$ is chosen and $\Phi_{\max}^{2}(\bm{s}) = \left(9/12, 1/4, 1/4\right)$.
Finally, $\Phi_{\max}^{3}(\bm{s}) = \bm{t} = \mu\Phi_{\max}$ by choosing the tight action $\alpha$.
\end{example}

\subparagraph*{Only tight actions eventually.} 
However, we show that in the convergent sequences~\eqref{max-above} and~\eqref{min-below}, the actions chosen by the respective Bellman operators~$\Phi_{\max}$ and~$\Phi_{\min}$ are all tight, after some number of iterations.

\begin{lemma}\label{nbhd-tight}
Let $\Phi \in \{\Phi_{\max}, \Phi_{\min}\}$ be a Bellman operator $\Phi: [0, 1]^d\rightarrow [0, 1]^d$ such that $\mu\Phi = \bm{t}$.
There exists a $\delta$-neighbourhood of the fixed point
\[U_\delta(\bm{t}) = \{\bm{x} \in [0,1]^d : d(\bm{x}, \bm{t}) < \delta \}\]
such that 
for every $\bm{x} \in U_\delta(\bm{t})$, the vector $\Phi(\bm{x})$ is obtained by applying only tight actions.
That is, 
\[\Phi(\bm{x}) = \left(L_{\alpha_1}(\bm{x}), \dots, L_{\alpha_d}(\bm{x})\right),\]
where each $\alpha_i \in Act_i$, $1\leq i\leq d$, is tight.
\end{lemma}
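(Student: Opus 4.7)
The plan is to use continuity of each linear polynomial $L_\alpha$ together with the strictly positive ``leakage gap'' of every leaking action at $\bm{t}$. Concretely, for each leaking $\beta \in Act_i$ I would set $\gamma_\beta := t_i - L_\beta(\bm{t})$, which is strictly positive by the definition of leaking. Since the MDP has only finitely many actions in total, the quantity $\gamma := \min \gamma_\beta$, taken over all leaking actions in all states, is a strictly positive rational. This is the only global estimate I need; the rest is pure continuity.

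Next I would observe that every $L_\alpha$ is $1$-Lipschitz with respect to the $\ell^\infty$-norm. Indeed, $|L_\alpha(\bm{x}) - L_\alpha(\bm{t})| \leq \sum_{j \in S_d} \prob(i,\alpha,j)\cdot |x_j - t_j| \leq ||\bm{x} - \bm{t}||$, since the coefficients $\prob(i,\alpha,j)$ are non-negative and sum to at most $1$. Choose $\delta := \gamma/3$, so that $\delta < \gamma/2$, and fix $\bm{x} \in U_\delta(\bm{t})$. For a tight action $\alpha \in Act_i$ I then have $L_\alpha(\bm{x}) \geq L_\alpha(\bm{t}) - \delta = t_i - \delta > t_i - \gamma/2$, while for any leaking $\beta \in Act_i$ I have $L_\beta(\bm{x}) \leq L_\beta(\bm{t}) + \delta \leq t_i - \gamma + \delta < t_i - \gamma/2$. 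Consequently $L_\beta(\bm{x}) < L_\alpha(\bm{x})$ for every pairing of a tight $\alpha$ and a leaking $\beta$ in the same state $i$.

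This strict separation forces the maximum $\Phi(\bm{x})_i = \max_{\alpha \in Act_i} L_\alpha(\bm{x})$ to be attained only at tight actions, so for each $i$ one may select a tight $\alpha_i \in Act_i$ with $\Phi(\bm{x})_i = L_{\alpha_i}(\bm{x})$, which is exactly what the lemma claims. I do not expect any real obstacle here: the single non-trivial ingredient beyond affine continuity is the finiteness of the action set, which is precisely what converts the pointwise strict inequality $L_\beta(\bm{t}) < t_i$ at the fixed point into a uniform strict inequality on the whole $\delta$-neighbourhood.
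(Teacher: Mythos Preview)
Your argument is correct and in fact more direct than the paper's. You compare $L_\alpha(\bm{x})$ for tight $\alpha$ against $L_\beta(\bm{x})$ for leaking $\beta$ head-on, using only the $1$-Lipschitz property of each $L_\alpha$ and the uniform positive gap $\gamma = \min_\beta (t_i - L_\beta(\bm t))$. The paper instead considers the tuple $A$ of actions actually chosen at $\bm{x}$, bounds $\|f_A(\bm t) - \bm t\|$ via the triangle inequality together with \cref{Phi-lip-like}, and then invokes a rationality/denominator argument (the least common denominator $D$ of the $t_i$ and the $L_\alpha(\bm t)$) to force $f_A(\bm t) = \bm t$. Both routes produce an explicit, computable $\delta$ (yours $\gamma/3$, theirs $1/(2D)$; note $\gamma \ge 1/D$, so the two are comparable), which is what the subsequent \cref{event-tight} needs. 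Your approach avoids the detour through \cref{Phi-lip-like} and the denominator trick; the paper's approach has the minor advantage that it packages the constant purely in terms of syntactic data of $\mdp$. One small omission: you should note the trivial edge case where no leaking action exists (so $\gamma$ is a minimum over the empty set), and that each $Act_i$ contains at least one tight action because $\bm t$ is the fixed point and the maximum is attained.
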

\begin{proof}
Since the transition probabilities in~$\mathcal{M}$ are rational numbers, one argues that $t_1, \dots, t_d$ are rational, too.
We consider the set of rational numbers that consists of
$t_1, \dots, t_d$, along with $L_\alpha(\bm{t})$ for each action~$\alpha$.
It is sufficient to consider leaking actions because for a tight action~$\alpha$ 
applied in state~$i$, $L_\alpha(\bm{t}) = t_i$.
Let~$D$ be the least common denominator of the numbers in the aforedescribed set.

Let $\delta:= \frac{1}{2D}$ and pick~$\bm{x} \in U_\delta(\bm{t})$.
Further let $A = (\alpha_1, \dots, \alpha_d)$ be the actions chosen at~$\bm{x}$ by the Bellman operator~$\Phi$.
We show that each action in~$A$ is tight.
Denote by~$f_A$ the effect of applying~$A$, i.e.,
\[f_A(\bm{t}) = \left(L_{\alpha_1} (\bm{t}), \dots, L_{\alpha_d} (\bm{t})\right) \quad \text{and} \quad
f_A(\bm{x}) = \Phi(\bm{x}).
\]

We notice that~$f_A$ is 1-Lipschitz (which, in fact, holds for any choice~$A$ of actions). 
Indeed, let us consider arbitrary~$\bm{u}, \bm{v} \in [0,1]^d$. 
We have \begin{align*}
	||f_A(\bm{u}) - f_A(\bm{v})|| &= \max_{1 \leq i \leq d} |L_{\alpha_i}(\bm{u}) - L_{\alpha_i}(\bm{v})|\\
	&\leq\max_{1\leq i \leq d} \sum_{j \in \suc(\alpha_i)} \prob(i, \alpha_i, j)\cdot  \left| u_j-v_j \right|\\
	&\leq \max_{1\leq i \leq d} \left(1 \cdot \max_{j \in \suc{\alpha_i}}|u_j - v_j|\right)\leq
	\max_{1\leq i \leq d} \left| u_i-v_i \right| = ||\bm{u} - \bm{v}||.
\end{align*}
Utilising the 1-Lipschitz property, we observe $||f_A(\bm{t}) - \Phi(\bm{x})|| \leq ||\bm{t} - \bm{x}||$.
Moreover, we have $||\Phi(\bm{x}) - \bm{t}|| \leq ||\bm{x} - \bm{t}||$ due to~\cref{Phi-lip-like}. 
Therefore, 
\begin{align}\label{eq:A-optimal}
	||f_A(\bm{t}) - \bm{t}|| &\leq ||f_A(\bm{t}) - \Phi(\bm{x})|| + ||\Phi(\bm{x}) - \bm{t}||\nonumber\\
	&\leq ||\bm{t} - \bm{x}|| + ||\bm{x} - \bm{t}|| < \delta + \delta = \frac{1}{D}. 
\end{align} 

On the other hand, $
||f_A(\bm{t}) - \bm{t}|| = \max_{1 \leq i \leq d} |L_{\alpha_i}(\bm{t}) - t_i|$.
By the definition of~$D$, we know that $L_{\alpha_i}(\bm{t}) \neq t_i$ implies $|L_{\alpha_i}(\bm{t}) - t_i| \geq \frac{1}{D}$.
We conclude from~\cref{eq:A-optimal} that $f_A(\bm{t}) = \bm{t}$.
Equivalently, all actions chosen at~$\bm{x}$ are tight.
\end{proof}

The vector sequence $\bm{s}, \Phi(\bm{s}), \Phi^2(\bm{s}), \dots$ converges to~$\bm{t}$ due to~\cref{all-converge}. Therefore, it reaches a $\frac{1}{2D}$-neighbourhood of~$\bm{t}$ after finitely many steps.
Furthermore, an upper bound on the number of necessary steps can be computed as in~\cref{t-not-fp}.
\begin{corollary}\label{event-tight}
Let $\Phi \in \{\Phi_{\max},\Phi_{\min}\}$.
For an arbitrary initial vector $\bm{s} \in [0,1]^d$, there exists an effectively computable~$N \in \NN$ such that in the sequence $(\Phi^n(\bm{s}))_{n\in \NN}$
for every $n \geq N$, $\Phi^{n+1}(\bm{s})$ is obtained by applying only tight actions to $\Phi^n(\bm{s})$.
\end{corollary}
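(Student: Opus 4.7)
The plan is to combine \cref{nbhd-tight} with the effective convergence bound already used in the proof of \cref{t-not-fp}. \cref{nbhd-tight} supplies a radius $\delta = \tfrac{1}{2D}$ (where $D$ is the least common denominator of $t_1,\dots,t_d$ together with $L_\alpha(\bm{t})$ for every leaking action $\alpha$) such that whenever $\bm{x} \in U_\delta(\bm{t})$, the Bellman operator selects only tight actions at $\bm{x}$. The constant $D$, and hence $\delta$, is computable directly from the rational description of $\mdp$. So the corollary will follow once I pin down, effectively, an iteration count after which the sequence has entered $U_\delta(\bm{t})$ and cannot leave it.

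First I would compute $D$ and $\delta$. Next, mimicking the argument in the proof of \cref{t-not-fp}, I invoke \cite[Theorem~2]{HaddadM18} to obtain an effective bound $N_\delta \leq A\lceil \tfrac{\log\delta}{\log(1-B^A)}\rceil$, depending only on $\mdp$, for which $\|\Phi^{N_\delta}(\bm{0}) - \Phi^{N_\delta}(\bm{1})\| < \delta$. By monotonicity of $\Phi$, we have $\Phi^{N_\delta}(\bm{0}) \leq \Phi^{N_\delta}(\bm{s}) \leq \Phi^{N_\delta}(\bm{1})$ and $\Phi^{N_\delta}(\bm{0}) \leq \bm{t} \leq \Phi^{N_\delta}(\bm{1})$, so $\|\Phi^{N_\delta}(\bm{s}) - \bm{t}\| < \delta$, i.e.\ $\Phi^{N_\delta}(\bm{s}) \in U_\delta(\bm{t})$.

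Then I observe that the neighbourhood $U_\delta(\bm{t})$ is forward-invariant under $\Phi$: by \cref{Phi-lip-like}, $\|\Phi(\bm{x}) - \bm{t}\| \leq \|\bm{x} - \bm{t}\|$, so $\Phi^n(\bm{s}) \in U_\delta(\bm{t})$ for every $n \geq N_\delta$. Setting $N := N_\delta$ and applying \cref{nbhd-tight} at each iterate $\Phi^n(\bm{s})$ with $n \geq N$ yields that the actions chosen to produce $\Phi^{n+1}(\bm{s})$ are all tight, as required.

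No step is particularly subtle here; the only thing to be careful about is that both ingredients—the radius $\delta$ from \cref{nbhd-tight} and the convergence bound of \cite{HaddadM18}—depend only on $\mdp$ (and in the second case, on $\delta$), so the resulting $N$ is genuinely effectively computable, not merely existential. The invariance of $U_\delta(\bm{t})$ is what guarantees that ``eventually tight'' means ``tight at every step from $N$ onwards'', rather than just ``tight infinitely often''.
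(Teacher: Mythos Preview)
Your proposal is correct and follows essentially the same approach as the paper: compute the radius $\delta = \tfrac{1}{2D}$ from \cref{nbhd-tight}, use the effective interval-iteration bound of \cite{HaddadM18} (exactly as in the proof of \cref{t-not-fp}) to reach $U_\delta(\bm{t})$, and then invoke \cref{Phi-lip-like} for forward invariance. The paper's justification is terser---it simply cites convergence and points to \cref{t-not-fp}---but you have spelled out precisely the ingredients that reference is meant to unpack.
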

We point out that the argument used in the proof of~\cref{nbhd-tight},
bears resemblance to (and is inspired by) the proof of~\cite[Theorem~3]{HaddadM18}.
However, the eventual optimality of tight actions,
which we establish here,
is not a matter of discussion in~\cite{HaddadM18} or any other work we know.

\subparagraph*{From probabilities to $\{-1,0,1\}$.} 
As before, we show that the successor of~$f(\bm{x})$ is well-defined with respect to the Bellman operator.
For this case, it holds in a certain neighbourhood of the fixed point~$\bm{t}$. 
\begin{lemma}\label{abstr-with-tight}
Let~$\Phi \in \{\Phi_{\max},\Phi_{\min}\}$ and let~$\delta$ be chosen as above
to guarantee that~$\Phi$ only picks tight actions in the neighbourhood~$U_\delta(\bm{t})$.

Consider vectors~$\bm{x},\bm{y} \in U_\delta(\bm{t})$, both comparable with~$\bm{t}$.
Provided $f(\bm{x}) = f(\bm{y})$, we have
\(f(\Phi(\bm{x})) = f(\Phi(\bm{y})).\)
\end{lemma}
\begin{proof}
The assumption~$\bm{x}, \bm{y} \in U_\delta(\bm{t})$ is necessary to shake off the effect of the leaking actions (cf.\ the assumptions of~\cref{phi-below}).
Since we assume $f(\bm{x}) = f(\bm{y})$, 
	note that $\bm{x} \leq \bm{t}$ is equivalent to~$\bm{y} \leq \bm{t}$,
	and $\bm{t} \leq \bm{x}$ is equivalent to $\bm{t} \leq \bm{y}$.

Let $\bm{z} = (z_1, \dots, z_d)$ be an arbitrary vector in~$U_\delta(\bm{t})$.
Recall that this guarantees that for every~$i$, there is a tight action~$\alpha \in Act_i$ with 
$\Phi(\bm{z})_i = L_\alpha(\bm{z})$.

First, consider $\Phi = \Phi_{\max}$. We can assume $\bm{z} \geq \bm{t}$. 
It is not hard to see now that
having $j \in \suc(\alpha)$ with $z_j > t_j$ implies $L_\alpha(\bm{z}) > t_i$.
Therefore, $(\Phi_{\max}(\bm{z}))_i = t_i$ if and only if there exists no tight action~$\alpha \in Act_i$ that depends on
$j \in \suc(\alpha)$ with $z_j > t_j$.
Equivalently, we have
\begin{equation}\label{eq:phi-above}
	\varepsilon'_i = \max_{\substack{\alpha \in Act_i \\ \alpha \text{ tight}}} \; \max_{j \in \suc(\alpha)} \varepsilon_j,
\end{equation}
where~$\eps' = (\varepsilon'_1, \dots, \varepsilon'_d)$ is the abstraction of~$\Phi_{\max}(\bm{x})$, that is, $\eps' = f(\Phi_{\max}(\bm{x}))$.

Now consider $\Phi = \Phi_{\min}$ and assume $\bm{z} \leq \bm{t}$.
	Having $j \in \suc(\alpha)$ with $z_j < t_j$ implies $L_\alpha(\bm{z}) < t_i$.
Similarly to the $\max$-case, we have $(\Phi_{\min}(\bm{z}))_i = t_i$ if and only if there is no tight action~$\alpha \in Act_i$ that depends on
$j \in \suc(\alpha)$ with $z_j < t_j$.
Equivalently, we have
\begin{equation}\label{eq:psi-below}
	\varepsilon'_i = \min_{\substack{\alpha \in Act_i \\ \alpha \text{ tight}}} \; \min_{j \in \suc(\alpha)} \varepsilon_j,
\end{equation}
where~$\eps' = (\varepsilon'_1, \dots, \varepsilon'_d)$ is the abstraction of~$\Phi_{\min}(\bm{x})$, that is, $\eps' = f(\Phi_{\min}(\bm{x}))$.
	
Therefore, the abstraction vectors of~$\Phi(\bm{x})$ and~$\Phi(\bm{y})$ do not depend on the actual values in~$\bm{x}$ and~$\bm{y}$, but only on their abstractions $f(\bm{x}) = f(\bm{y})$.
\end{proof}
\begin{proof}[Proof of~\cref{above-decid}]
We consider both~\cref{max-above,min-below}, so we write~$\Phi \in \{\Phi_{\max},\Phi_{\min}\}$ and~$\bm{t} = \mu\Phi$.
	First, compute the bound~$N$ of~\cref{event-tight}.
If $\Phi^n(\bm{s}) = \bm{t}$ holds for some $n < N$, we terminate with a positive answer to the BOR problem.
Otherwise, we continue with the abstraction argument.
We compute~$f(\Phi^N(\bm{s}))$ and ask whether $\bm{0}$ is reached from it by iteratively applying the map $\eps \mapsto \eps'$ as defined by~\cref{eq:phi-above} and by~\cref{eq:psi-below}, respectively. 
In at most~$2^d - 1$ iterations we either reach~$\bm{0}$,
or discover an ever-repeating vector subsequence that does not contain~$\bm{0}$.
The former is the case if and only if $\bm{t}$ is reached by iterating~$\Phi$ from~$\Phi^N(\bm{s})$. This is due to~\cref{abstr-with-tight}.
\end{proof}
Combining \cref{below-decid,above-decid}%
, we obtain the following theorem.
\begin{theorem}\label{compar-decid}
There exists an algorithmic procedure that solves all BOR problem instances $(\Phi, \bm{s}, \bm{t})$ with
$\Phi \in \{\Phi_{\max},\Phi_{\min}\}$,
$\bm{t} = \mu\Phi$
and %
\(\bm{s} \; \{\leq, \geq\} \; \mu\Phi.\)
\end{theorem}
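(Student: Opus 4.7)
The plan is a straightforward dispatch argument that glues together the two preceding propositions. Given an input $(\Phi, \bm{s}, \bm{t})$ with $\bm{t} = \mu\Phi$ and $\bm{s}$ comparable to $\mu\Phi$, I would first compute the vector $\mu\Phi$ explicitly. As recalled in the introduction, $\mu\Phi = \bm{p^*}$ is the vector of optimal reachability probabilities in the underlying MDP, and it can be obtained in polynomial time via linear programming~\cite{bk}. In particular, its rational coordinates are available for exact comparison with those of $\bm{s}$.

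Next I would decide which of the two comparability cases applies. Since both $\bm{s}$ and $\mu\Phi$ have rational coordinates, I can test in constant time per coordinate whether $s_i \leq \mu\Phi_i$ for all $i$ or $s_i \geq \mu\Phi_i$ for all $i$. By the assumption that $\bm{s}$ is comparable to $\mu\Phi$, at least one of these holds; if $\bm{s} = \mu\Phi$, both hold, and then $\Phi^{0}(\bm{s}) = \bm{s} = \bm{t}$ already gives a positive answer.

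In the non-degenerate cases I would invoke the corresponding procedure: \cref{below-decid} if $\bm{s} \leq \mu\Phi$, and \cref{above-decid} if $\bm{s} \geq \mu\Phi$. Each procedure has already been shown to terminate with a correct yes/no answer to the BOR instance, so the composite algorithm is correct and effective.

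Since the two propositions already carry the full technical weight, there is no genuine obstacle in this final step; the only thing to verify is that the dispatch is well-defined (covered by the $\bm{s} = \mu\Phi$ corner case above) and that the componentwise comparison of $\bm{s}$ with $\mu\Phi$ is effective, which follows from the computability of $\mu\Phi$ over~$\QQ^d$.
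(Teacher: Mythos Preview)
Your proposal is correct and matches the paper's approach exactly: the theorem is stated as an immediate consequence of combining \cref{below-decid} and \cref{above-decid}, and your dispatch argument (compute $\mu\Phi$, test componentwise which inequality holds, then invoke the appropriate proposition) is precisely how that combination is effected. The paper gives no further detail beyond ``Combining \cref{below-decid,above-decid}'', so your write-up is in fact more explicit than the original.
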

\subsection{Initial and target vectors are incomparable}
\label{sec:incomp}
As before, let $\Phi \in \{\Phi_{\max}, \Phi_{\min}\}$ be a Bellman operator with unique fixed point~$\mu\Phi$.
We keep assuming~$\bm{t}= \mu\Phi$ and consider the remaining case, that is, the case when $\bm{s}$ and $\bm{t}$ are two incomparable vectors,
denoted $\bm{s} \bowtie \bm{t}$.

We can assume $\bm{s} \in U_\delta(\bm{t})$ as defined in~\cref{nbhd-tight}.
Clearly, starting with an arbitrary incomparable vector, we can apply~$\Phi$ up to the pre-computed power~$N$,
reaching either a comparable vector (potentially including~$\bm{t}$ itself), or the $\delta$-neighbourhood of~$\bm{t}$.
In the latter case, set $\bm{s}$ to be the first vector inside the neighbourhood.
Hence, for every $\bm{x}$ discussed below, the vector $\Phi(\bm{x})$ is obtained by applying only tight actions.
\subparagraph*{Positionality.}
Even after eventually adhering to tight actions, the behaviour of $\Phi$'s iterations is not described by a single linear transformation.
We recall from~\cref{vi-positional} that positionality is prominently sufficient for achieving optimal probabilities \emph{in the limit}~\cite{bk}.
However, \cref{ex:3d} shows a more subtle behaviour for our reachability problem.
\begin{example}\label{ex:3d}
In~\cref{fig:mdpNonPositional}, we present an MDP~$\mdp_2$ with~$S = \{s_1, s_2, s_3, s_{-}, t\}$ where the ``missing'' probabilistic transitions lead to~$s_{-} \not \in S_d$.
We omit~$s_{-}$ and transitions to it, for simplicity.
Let $Act_1 = \{\alpha_1, \alpha_2\}$, $Act_2 = \{\beta\}$, $Act_3 = \{\gamma\}$ be the actions available in the states of~$S_3$.

For the $\max$-operator $\Phi_{\max}$ observe that $\mu\Phi_{\max} = (\frac{1}{2},\frac{1}{2},\frac{1}{2})$ and all actions are tight, including both~$\alpha_1,\alpha_2$.

Let $\bm{s} = (0,\frac{5}{6},\frac{5}{6})$.
When applying~$\Phi_{\max}$, the action $\alpha_1$ is chosen over~$\alpha_2$ in~$\bm{s}$, and so
$\Phi_{\max}(\bm{s}) = (\frac{5}{9}, \frac{4}{9}, \frac{4}{9})$.
However, in the next iteration, $\alpha_2(\frac{5}{9}, \frac{4}{9}, \frac{4}{9}) = (\frac{1}{2},\frac{1}{2},\frac{1}{2}) \geq 
(\frac{13}{27},\frac{1}{2},\frac{1}{2})=  \alpha_1(\frac{5}{9}, \frac{4}{9}, \frac{4}{9})$ yielding
$\Phi^2_{\max}(\bm{s}) = \mu\Phi_{\max}$.

Spectacularly, none of two positional schedulers reaches~$\mu\Phi_{\max}$, 
which can be proved using%
~\cite{kannan1986poly}.
\end{example}
With positionality out of question, 
we need to study schedulers that switch actions over time.
\subparagraph*{Matrix semigroups.}
In the sequel, we take a matrix perspective on BOR
by introducing a~$d\times d$-matrix for every tuple of tight actions.
The set of such tuples is finite, and thus we argue that the behaviour of
Bellman operator iterations from~$\bm{s}$ is governed by multiplying the vector~$\bm{s}$ with elements of a semigroup~$\mathcal{S}$ generated by finitely many matrices~$M_1, \dots, M_k$.

We associate with every tight action~$\alpha \in Act_i$ a row vector
\[\left( \prob(s_i, \alpha, s_1), \dots, \prob(s_i, \alpha, s_d) \right)\]
of probabilities for going to the states of~$S_d$,
as well as a scalar $\prob(s_i, \alpha, t)$ for reaching~$t$.

For each state~$s_i$, let~$\mathcal{F}_i$ denote the set of row vectors for tight actions in~$Act_i$.
Notice that all row vectors only have non-negative entries; furthermore, the sum of elements in each row vector is at most~1. %
We will further refer to matrices with all rows satisfying these properties as~\emph{substochastic}.

\begin{definition}
A family $\mathcal{F} \subset \QQ^{d \times d}$ of matrices is called a \emph{product family}
if $\mathcal{F}$ consists of all possible matrices with $i$-th row from~$\mathcal{F}_i$ for all~$i \in \{1, \dots, d\}$.

We further let $\mathcal{S}: = \langle \mathcal{F} \rangle$ 
be the semigroup generated by~$\mathcal{F}$.
\end{definition}
We introduce the map~$f$ by $\bm{x} \mapsto \bm{x} - \mu\Phi$.
\begin{mdframed}
\textbf{The Product Family Reachability (PFR) Problem.}
Let $\mathcal{F} = \{M_1, \dots, M_k\}$ be a product family of substochastic matrices, and $\eps \in [-1,1]^d\cap \QQ^d$.
The \emph{PFR problem} asks, given a map $F: \QQ^d \rightarrow \QQ^d$: Does there exist~$n \geq 0$ such that $F^n(\eps) = \bm{0}$? 

Here, $F$ stands for one of the two maps:
\begin{enumerate}
	\item $F_{\max}(\bm{v}) = \max_{1\leq i \leq k} \left(M_i \cdot\bm{v}\right)$,
	\item $F_{\min}(\bm{v}) = \min_{1\leq i \leq k} \left(M_i \cdot\bm{v}\right)$.
\end{enumerate}
\end{mdframed}

Note in passing that $\max$ (respectively, $\min$) over vectors is taken with respect to the partial order~$\leq$.
The operators~$F_{\max}$ and~$F_{\min}$ are thus well-defined, that is,
for $F \in \{F_{\max}, F_{\min}\}$
there exists a matrix~$M \in \mathcal{F}$ such that $F(\bm{v}) = M\cdot \bm{v}$. 
It holds indeed that $F_{\max}(\bm{v})_i \geq \bm{w} \cdot \bm{v}$ and $F_{\min}(\bm{v})_i \leq \bm{w} \cdot \bm{v}$ for any $\bm{w} \in \mathcal{F}_i$, $1 \leq i \leq d$.
\begin{restatable}{proposition}{probequiv}
\label{probequiv}
Every~$d$-dimensional instance of BOR with $\bm{t} = \mu\Phi$ and $\bm{s} \in U_\delta(\bm{t})$ (thus guaranteeing that only tight actions are used)
is equivalent to a~$d$-dimensional instance of the PFR problem.
\end{restatable}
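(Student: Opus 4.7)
The plan is to set up a correspondence between the iterations of $\Phi$ on the affine coordinates $\bm{x}\in U_\delta(\bm{t})$ and the iterations of $F$ on the $\eps$--coordinates $\eps(\bm{x}) = \bm{x} - \mu\Phi$. Given a BOR instance $(\Phi,\bm{s},\bm{t})$ with $\bm{t} = \mu\Phi$ and $\bm{s} \in U_\delta(\bm{t})$, I would produce the PFR instance $(\mathcal{F},\eps(\bm{s}))$, where $\mathcal{F}$ is the product family built from the sets $\mathcal{F}_i$ of row vectors associated with tight actions in each state $s_i\in S_d$. Note $\eps(\bm{s}) \in [-1,1]^d\cap\QQ^d$ since $\bm{s},\bm{t}\in[0,1]^d\cap\QQ^d$.

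The first step is to linearise tight actions around the fixed point. For a tight $\alpha\in Act_i$ we have $L_\alpha(\bm{t}) = t_i$, and because $L_\alpha$ is affine,
\[
L_\alpha(\bm{x}) - t_i \;=\; L_\alpha(\bm{t}+\eps(\bm{x})) - L_\alpha(\bm{t}) \;=\; \sum_{j\in S_d}\prob(s_i,\alpha,s_j)\cdot\eps(\bm{x})_j \;=\; M_\alpha\cdot\eps(\bm{x}),
\]
where $M_\alpha$ is the row vector associated with $\alpha$. Combining this with \cref{nbhd-tight} (so that only tight actions are chosen on $U_\delta(\bm{t})$), a single step of $\Phi$ satisfies
\[
\eps(\Phi(\bm{x}))_i \;=\; \Phi(\bm{x})_i - t_i \;=\; \max_{\substack{\alpha\in Act_i\\ \alpha\text{ tight}}} M_\alpha\cdot\eps(\bm{x}).
\]
The second step is to lift this rowwise maximum to a matrix-level maximum. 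Since $\mathcal{F}$ is a \emph{product family}, every independent choice of a maximising row in each state assembles into a matrix $M^\ast\in\mathcal{F}$ whose rows realise all $d$ componentwise maxima simultaneously. Hence $\eps(\Phi(\bm{x})) = \max_{M\in\mathcal{F}} M\cdot\eps(\bm{x}) = F(\eps(\bm{x}))$.

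The third step is to ensure the correspondence persists under iteration. By \cref{Phi-lip-like}, $\|\Phi(\bm{x})-\mu\Phi\|\leq\|\bm{x}-\mu\Phi\|$, so $\Phi^n(\bm{s})\in U_\delta(\bm{t})$ for all $n\geq 0$; consequently the single-step identity applies at every iteration and a straightforward induction yields $\eps(\Phi^n(\bm{s})) = F^n(\eps(\bm{s}))$. In particular $\Phi^n(\bm{s}) = \bm{t}$ iff $F^n(\eps(\bm{s})) = \bm{0}$, which is precisely the equivalence required. For the converse direction, the same construction shows that any PFR instance built from a product family of substochastic matrices is obtainable from some Bellman operator with the chosen rows corresponding to tight actions.

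I expect the main subtlety to be the product-family aggregation in Step 2: the componentwise maximum of $M\cdot\eps(\bm{x})$ over $M\in\mathcal{F}$ does not in general equal a single $M^\ast\cdot\eps(\bm{x})$, but here it does precisely because $\mathcal{F}$ is closed under arbitrary row-recombination. Beyond this observation, the argument reduces to routine verification that the affine maps $L_\alpha$ at tight actions translate into substochastic linear maps once we recenter at $\mu\Phi$.
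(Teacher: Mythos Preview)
Your proposal is correct and follows essentially the same route as the paper: linearise each tight action around $\mu\Phi$ to get $\eps(L_A(\bm{x})) = M\cdot\eps(\bm{x})$, use the product-family structure to realise the rowwise maximum as a single matrix (yielding $\eps(\Phi(\bm{x})) = F(\eps(\bm{x}))$), and then induct. Your explicit appeal to \cref{Phi-lip-like} to keep all iterates inside $U_\delta(\bm{t})$ is a welcome clarification that the paper's appendix leaves implicit.
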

\begin{proof}
Let~$\mdp$ be an MDP whose Bellman operator~$\Phi \in \{\Phi_{\max}, \Phi_{\min}\}$ has a unique fixed point~$\mu\Phi \in [0,1]^d$.
Fix $\bm{t} = \mu\Phi$ and $\bm{s} \in U_\delta(\bm{t})$, where the neighbourhood~$U_\delta(\bm{t})$ is as in~\cref{nbhd-tight}.
We will introduce a product family of substochastic matrices in $\QQ^{d \times d}$ with a corresponding operator~$F$ and
we will prove that there exists~$n\in \NN$ such that $\Phi^n(\bm{s}) = \bm{t}$
if and only if
$F^n(f(\bm{s})) = \bm{0}$.

Let~$\mathcal{F}_i$ be the set of all row vectors for actions in~$Act_i$ of~$\mdp$, for each $i \in \{1, \dots, d\}$.
Then, a finite product family~$\mathcal{F} = \{M_1, \dots, M_k\} \subset \QQ^{d \times d}$ obtained from these sets only contains substochastic matrices.

Let~$A = (\alpha_1, \dots, \alpha_d)$ be a tuple of actions in~$\mdp$, where $\alpha_i \in Act_i$ for each~$i$.
Consider its corresponding matrix~$M \in \mathcal{F}$.
Formally, $(M)_{i,j} = \prob(s_i, \alpha_i, s_j)$.
Let $L_A(\bm{x})$ denote the vector $\left(L_{\alpha_1}(\bm{x}), \dots, L_{\alpha_d}(\bm{x})\right)^\top$.
By definition of a row vector of an action, we have
$L_A(\bm{x}) = M \cdot \bm{x} +  \prob_A$, where $\prob_A = \left(\prob(s_1, \alpha_1, t), \dots, \prob(s_d, \alpha_d, t) \right)^\top$. 
In particular, $\mu\Phi = M \cdot \mu\Phi + \prob_A$ since every action in~$A$ is tight.
As an intermediate step, we prove the claim below.
\begin{claim*}\label{eps-facts}
	\begin{itemize}The following equalities hold for every~$\bm{x} \in U_\delta(\bm{t})$:
		\item $f(L_A(\bm{x})) = M \cdot f(\bm{x})$,
		\item $f(\Phi_{\max}(\bm{x})) = F_{\max}(f(\bm{x}))$ and $f(\Phi_{\min}(\bm{x})) = F_{\min}(f(\bm{x}))$,
		\item for every~$n \geq 1$, $f(\Phi_{\max}^n(\bm{x})) = F_{\max}^n(f(\bm{x}))$ and $f(\Phi_{\min}^n(\bm{x})) = F_{\min}^n(f(\bm{x}))$.
	\end{itemize}
\end{claim*}
\begin{proof}[Proof (of the Claim)]
	Use linearity of~$L_A$ and the equality $L_A(\mu\Phi) = \mu\Phi$ to obtain
	\begin{align*}M \cdot f(\bm{x}) = M \cdot \left(\bm{x} - \mu\Phi\right) = &M \cdot \bm{x} - M \cdot \mu\Phi =\\ M \cdot \bm{x} - (\mu\Phi - \prob_A) =&  M \cdot \bm{x} -
		\mu\Phi + \prob_A\\
		\text{and } \qquad f(L_A(\bm{x})) = L_A(\bm{x}) - \mu\Phi =&
		M \cdot \bm{x} + \prob_A - \mu\Phi.	
	\end{align*}
	The first part of the claim thus holds.
	The second statement follows from $\Phi_{\max}(\bm{x}) = \max_A L_A(\bm{x})$.
	Indeed, \begin{align*}
		f(\Phi_{\max}(\bm{x})) = f(\max_A L_A(\bm{x})) = \max_A (f(L_A(\bm{x}))) = \max_{M \in \mathcal{F}} M \cdot f(\bm{x}) = F_{\max}(f(\bm{x}))
	\end{align*}
and similarly, $f(\Phi_{\min}(\bm{x}))= F_{\min}(f(\bm{x}))$.
	
	Let $\Phi \in \{\Phi_{\max},\Phi_{\min}\}$. 
	The second item further serves both as the base case (with  argument $\Phi(\bm{x})$) 
	and the induction step (with argument $\Phi^{n+1}(\bm{x})$) to prove the third, final item of the claim.
\end{proof}
Now, set~$\eps\coloneqq f(\bm{s}) = \bm{s} - \mu\Phi$ in the PFR problem. Clearly, $\eps \in [-1, 1]^d \cap \QQ^d$.
The equality $\Phi^n(\bm{s}) = \mu\Phi$ holds if and only if $f(\Phi^n(\bm{s})) = \bm{0}$ if and only if $F^n(\eps) = \bm{0}$.
\end{proof}

\emph{\cref{ex:3d} (revisited).} The MDP $\mdp_2$ in~\cref{fig:mdpNonPositional} has a product family $\mathcal{F} = \{M_1,M_2\}$. The two matrices correspond to the action tuples $A_1 = (\alpha_1, \beta, \gamma)$ and $A_2 = (\alpha_2, \beta, \gamma)$:
\[M_1 = \begin{pmatrix}
1/3 & 1/3 & 1/3 \\
1/3 & 1/3 & 0\\
1/3 & 0 & 1/3
\end{pmatrix},
\hspace{1cm}
M_2 = \begin{pmatrix}
1/2 & 1/4 & 1/4 \\
1/3 & 1/3 & 0\\
1/3 & 0 & 1/3
\end{pmatrix}.
\]
We have $\eps:= f(\bm{s}) = \bm{s} - \mu\Phi_{\max} = (-1/2, 1/3, 1/3)$, and 
\[F_{\max}^2(\eps) = \max_{1 \leq i, j \leq 2}\left(M_jM_i\eps\right) = M_2 M_1\eps = \bm{0}.\]
Indeed, %
$\Phi_{\max}^2(\bm{s}) = \mu\Phi_{\max}$.
Meanwhile, $M_1^n \cdot \eps \neq \bm{0}$ and $M_2^n \cdot \eps \neq \bm{0}$ for all~$n$.

\begin{remark}\label{not-just-reach}
	An instance of the PFR problem with $F = F_{\max}$ is a ``yes'' instance if and only if 
	\begin{equation*}\label{eps-pre-fp}
		\exists n. \enspace	\left(\exists M = M_{i_n} \dots M_{i_1}. \enspace
		M \cdot \eps = \bm{0} \right) \enspace
		\wedge \enspace
		\left(	\forall M' = M_{j_n} \dots M_{j_1}. \enspace
		M' \cdot \eps \leq \bm{0} \right),	
	\end{equation*}
	where $i_1, \dots, i_n, j_1, \dots, j_n \in \{1, \dots, k\}$. 
	Similarly, for $F = F_{\min}$ an instance is ``yes'' if and only if 
	\begin{equation*}
		\exists n. \enspace	\left(\exists M = M_{i_n} \dots M_{i_1}. \enspace
		M \cdot \eps = \bm{0} \right) \enspace
		\wedge \enspace
		\left(	\forall M' = M_{j_n} \dots M_{j_1}. \enspace
		M' \cdot \eps \geq \bm{0} \right).
	\end{equation*}
Finding a semigroup element~$M \in \mathcal{S}$ that satisfies $M \cdot \eps = \bm{0}$,
	together with previously discussed techniques, 
	is sufficient to answer the BOR problem for both $\Phi_{\max}$ and $\Phi_{\min}$.
	If~$M \cdot \eps = M_{i_n} \dots M_{i_1}  \cdot \eps=\bm{0}$,
	then
	\begin{enumerate}
		\item $F_{\max}^n(\eps) \geq \bm{0}$ and hence $\Phi^n_{\max}(\bm{s}) \geq \bm{t}$, and
		\item $F_{\min}^n(\eps) \leq \bm{0}$ and hence $\Phi^n_{\min}(\bm{s}) \leq \bm{t}$.
	\end{enumerate} Then, we can apply the complete algorithm from~\cref{above-decid}. 
	Notice that this does not necessarily imply that $(\Phi, \bm{s}, \bm{t})$ is a positive BOR instance.
\end{remark}

However, deciding whether a matrix~$M \in \mathcal{S}$ with $M\cdot\eps = \bm{0}$ exists is per se an undecidable problem for general matrices~\cite{Bell2008undec}.
There, the so-called \emph{vector reachability problem} for a matrix semigroup~$\mathcal{S} = \langle M_1, \dots, M_k \rangle$ asks for given~$\bm{x}, \bm{y} \in \QQ^d$, whether there exists~$M \in \mathcal{S}$ such that $M\cdot \bm{x} = \bm{y}$.
\subparagraph*{Unique tight actions.}
We conclude this section with the discussion of the PFR Problem restricted to semigroups generated by a single matrix.
This corresponds to the assumption that there is a unique tight action in every state~$s_i \in S_d$.
The decidability for this restriction is not surprising and, in particular, 
follows from the algorithmic procedure of~\cite{kannan1980decid}.

Nevertheless, we employ simple linear algebra techniques
to provide an alternative proof for our version of the problem (importantly, the target vector is zero).
\begin{proposition}\label{position-dec}
	Let $M \in \QQ^d$ be a substochastic matrix and $\eps \in [-1,1]^d\cap \QQ^d$ an arbitrary vector.
	If there exists~$n \geq 0$ such that
	$M^n \cdot \eps = \bm{0}$, then
	there exists such $n \leq d$.
	Whether there exists~$n \geq 0$ such that
	$M^n \cdot \eps = \bm{0}$ can thus be answered algorithmically.
\end{proposition}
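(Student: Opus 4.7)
The plan is to prove the bound $n \leq d$ using a standard Krylov-subspace argument, exploiting the fact that any chain of nonzero iterates $M^0 \eps, M^1\eps, \ldots$ stopped just before hitting $\bm 0$ must be linearly independent. The substochastic property plays no role in this argument; only the ambient dimension~$d$ matters.

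First I would dispose of the trivial case $\eps = \bm 0$ (taking $n = 0$). Assuming $\eps \neq \bm 0$, let $n \geq 1$ be the smallest index with $M^n \cdot \eps = \bm 0$ (whose existence is by hypothesis). The heart of the proof is the claim that the vectors $\eps, M \eps, M^2 \eps, \dots, M^{n-1} \eps$ are linearly independent in $\QQ^d$, which forces $n \leq d$.

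To establish the claim I would argue by contradiction: suppose a nontrivial relation $\sum_{i=0}^{n-1} c_i M^i \eps = \bm 0$ holds, and let $j$ be the smallest index with $c_j \neq 0$. Dividing through by $c_j$ expresses $M^j \eps$ as a linear combination of $M^{j+1}\eps, \dots, M^{n-1}\eps$. Applying $M^{n-1-j}$ to both sides makes the left-hand side equal to $M^{n-1}\eps$, while each term on the right-hand side involves $M^k \eps$ with $k \geq n$, which vanishes because $M^n \eps = \bm 0$. This yields $M^{n-1}\eps = \bm 0$, contradicting the minimality of $n$. Hence the claimed linear independence holds and $n \leq d$.

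The algorithmic consequence is immediate: compute the iterates $\eps, M\eps, M^2 \eps, \dots, M^d \eps$ in $\QQ^d$ (each is a rational vector of bounded bit-size growth) and test whether any of them equals $\bm 0$. I do not foresee a genuine obstacle here; the only delicate point is ensuring the contradiction step properly uses $M^n \eps = \bm 0$ to kill every term with exponent $\geq n$, which is precisely why the exponent shift by $n-1-j$ is chosen.
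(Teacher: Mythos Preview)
Your proof is correct. The Krylov argument---showing that the iterates $\eps, M\eps, \dots, M^{n-1}\eps$ are linearly independent whenever $n$ is minimal with $M^n\eps = \bm{0}$---is sound, including the edge case $j = n-1$ (where the right-hand sum is empty and the contradiction is immediate). Your observation that substochasticity is irrelevant is also accurate; only the ambient dimension matters.

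The paper takes a different, though equally elementary, route: it argues via the ascending chain of kernels $\ker M^0 \subseteq \ker M^1 \subseteq \ker M^2 \subseteq \cdots$, showing that once $\ker M^n = \ker M^{n+1}$ the chain stabilises forever, and that strict growth can occur at most $d$ times by a dimension count. Thus $\bigcup_n \ker M^n = \ker M^d$, and any $\eps$ eventually mapped to $\bm{0}$ already satisfies $M^d\eps = \bm{0}$. The kernel-chain argument is a statement about $M$ alone, uniform over all vectors, whereas your Krylov argument is tailored to the specific $\eps$ and directly exhibits the obstruction to $n > d$ as a linear dependence. Both are standard; yours is arguably more self-contained for the single-vector question asked here, while the paper's version makes explicit the structural fact that the chain of iterated kernels stabilises at step $d$.
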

\begin{proof}
	Let $\ker M$ denote the kernel of~$M$, the set of all vectors~$\bm{x}$ such that~$M \cdot \bm{x} = \bm{0}$.
	
	Clearly, for all~$n\geq 0$, $\ker M^n \subseteq \ker M^{n+1}$.
	Furthermore, notice that $\ker M^n = \ker M^{n+1}$ implies $\ker M^{n+k} = \ker M^n$ for all~$k \geq 0$.
	Assume, for sake of contradiction, that there exists~$n$ and $k \geq 2$ such that $\ker M^n = \dots = \ker M^{n+k-1} \subsetneq \ker M^{n+k}$.
	Then consider~$\bm{x} \in \ker M^{n+k} \setminus \ker M^n$.
	It holds $M^{k-1}\bm{x} \in \ker M^{n+1}$ and, by assumption, $M^{k-1}\bm{x} \in \ker M^n$. 
	Then, however, $\bm{x} \in \ker M^{n+k -1} \setminus \ker M^n$, a contradiction that proves that, for any~$n$, either
	$\ker M^n \subsetneq \ker M^{n+1}$,
	or $\ker M^n = \ker M^{n+1} = \dots = \ker M^{n+k} = \dots$ holds.
	
	Recall that for every~$n$, $\ker M^n$ is a linear subspace of $\RR^d$.
	Therefore, either $\dim\ker M^{n+1} > \dim\ker M^n$, or $\dim\ker M^n = \dots = \dim \ker M^{n+k} = \dots$.
	Since $\dim\ker M^{n+i}$ is bounded from above by~$d$, the first statement follows.
	
	Finally, it suffices to check $M^n \cdot \eps \neq \bm{0}$ for $n \leq d$ in order to deduce that the equality does not hold for any~$n$.
	This gives a complete algorithm to solve the~$\bm{0}$-reachability problem.
\end{proof}
The reachability problem for $\max$- and $\min$-Bellman operators is thus completely solved for MDPs in which 
a positional scheduler with all tight actions is unique.
Equivalently, it is solved for~$\mathcal{S} = \langle M \rangle$.

However, we cannot employ the previous argument for~$\mathcal{S} = \langle M_1, M_2 \rangle$,
since the implication 
$\ker (M_1^k) = \ker(M_1^{k+1}) \Rightarrow \ker (M_2\cdot M_1^k) = \ker(M_2 \cdot M_1^{k+1})$ does not hold there.
For instance, in~\cref{ex:3d}, $\ker M_2M_1 \neq \ker M_2M_1^2$.
The decidability remains open for general semigroups---that is, when there are states with non-unique tight actions.
\section{A Decidable Case $d=2$}
\label{sec:2d}
In this section, we show that the PFR problem is decidable in dimension $d = 2$ for both $F = F_{\max}$ and $F = F_{\min}$.
Following~\cref{probequiv}, this will suffice for the decidability of the BOR problem in~$d=2$.

The key property that %
helps us establish the decidability is the existence of a total order associated with the (two-dimensional) row vectors of actions.
By arguing about this order, we show that the sequence~$\eps, F(\eps), F^2(\eps), \dots$ either has a vector comparable with~$\bm{0}$ among its first terms---or never reaches~$\bm{0}$.
\subparagraph{Actions in $d=2$ and lines.}
We further consider an MDP~$\mdp$ with~$d=2$.
Let $Act_1 = \{\alpha_1, \dots, \alpha_k\}$ and $Act_2 = \{\beta_1, \dots, \beta_\ell\}$ be the sets of tight actions available in two states~$S_d = \{s_1, s_2\}$ of~$\mdp$.
Recall that we associate the sets of row vectors $\mathcal{F}_1$ and $\mathcal{F}_2$ with actions in $Act_1$ and $Act_2$, respectively.
An action is a \emph{zero} $\alpha$- or $\beta$-action if its row vector is a zero vector~$(0, 0)$.
Denote by~$Act_i^*$ the subset of all non-zero actions in~$Act_i$, for each $i \in \{ 1, 2\}$.

We now discuss lines that correspond to actions of MDPs with $d=2$.
For our proof, it is important to identify the actions whose lines have the largest/smallest slope.
With a mild abuse of notation, we denote by $\alpha_i(x_1, x_2)$ the homogeneous part of the linear polynomial of~$\alpha_i$ (similarly for $\beta_j$).
This is exactly the dot product of action's row vector with $(x_1, x_2)$, or
\[\alpha_i(x_1, x_2) \coloneq \prob(s_1, \alpha_i, s_1) x_1 + \prob(s_1, \alpha_i, s_2) x_2.\]
For each non-zero action~$\gamma\in Act_1^* \cup Act_2^*$, the set of points~$(x_1, x_2)$ with $\gamma(x_1,x_2) = 0$ is the line orthogonal to the row vector of~$\gamma$.
We denote the \emph{angle} between this line and the positive direction of the $x$-axis measured counterclockwise by $\measuredangle\gamma$.
The angles of actions in~$Act^* = Act_1^* \cup Act_2^*$ are
numbers in $[\frac{\pi}{2}, \pi]$ and thus are totally ordered.

We denote by~$\alpha_{lo}$ an $\alpha$-action with the largest angle\footnote{It has the ``{\bf lo}west" line in the second quadrant.}.
The sign of~$\alpha_{lo}(x_1, x_2)$ is well-defined, that is,
it is independent of the choice of action
among those with the largest angle.
Similarly, $\alpha_{hi}$ is an $\alpha$-action with the smallest angle. 
Actions $\beta_{lo}$ and $\beta_{hi}$ are defined analogously.

\subparagraph{Kernels and their lines.} We now resume the matrix argumentation.
The row vectors in~$\mathcal{F}_1$ and~$\mathcal{F}_2$ define the product family $\{M_{1,1}, \dots, M_{k,\ell}\} \subset \QQ^{2 \times 2}$. That is,
\[%
M_{i,j} \cdot \bm{v} = 
\begin{pmatrix}
	\alpha_i(\bm{v}) \\	\beta_j(\bm{v})
\end{pmatrix}
\]
holds for all~$\alpha_i \in Act_1$, $\beta_j \in Act_2$, and all~$\bm{v} \in \QQ^2$.

\begin{definition}
	$F^{-1}(\bm{0})$ is the set of all vectors~$\eps$ such that $F(\eps) = \bm{0}$.
\end{definition}
We emphasise that the set~$F^{-1}(\bm{0})$ is entirely contained in the union $\cup_{i,j} \ker M_{i,j}$ of matrix kernels.
One immediate consequence thereof is that we can assume that at least one of the
matrices is singular. 
In other words, there exists~$M_{i,j}$ with
$\ker M_{i,j} \neq \{\bm{0}\}$.
Otherwise, no vector~$\bm{x} \neq \mu\Phi$ ever reaches~$\mu\Phi$.

A singular non-zero matrix~$M \in \QQ_{\geq 0}^{2\times 2}$ is of one of the two kinds.
Either one row vector of~$M$ is zero (corresponding to the case of a zero action);
or the two non-zero row vectors are collinear.
In both cases, $\ker M$ is a line with the angle in $[\frac{\pi}{2}, \pi]$.

Therefore, among all lines corresponding to the actions of~$\mdp$ 
one can identify those that are kernels of some matrix in the product family.
We refer to such lines as \emph{kernel lines}.
Let~$M_{lo}$ be a singular matrix whose kernel 
line has the largest angle among kernel lines\footnote{It is then ``below'' other kernel lines in the second quadrant.}. 
Similarly, let~$M_{hi}$ denote a singular matrix whose kernel has the smallest angle among kernel lines.
Note that the kernels of~$M_{lo}, M_{hi}$ are well-defined: there is a total order on the lines of actions, 
and there is at least one kernel line provided $F^{-1}(\bm{0}) \neq \{\bm{0}\}$.

Let $Q_i$, $i\in \{1,2,3,4\}$, be a coordinate plane quadrant.
Here, each $Q_i$ is a closed set.
For example, $Q_2 = \{\bm{x} = (x_1, x_2): x_1 \leq 0 \wedge x_2 \geq 0\}$.

\begin{lemma}\label{2d-comp-soon}
	Given a product family $\{M_{1,1}, \dots, M_{k,\ell}\} \subset \QQ^{2\times 2}$ of substochastic matrices
	and a vector~$\eps = (\varepsilon_1, \varepsilon_2)$. 
	The map~$F$ is defined as above by
	\[F(\bm{v}) = \max_{1\leq i \leq k, 1\leq j \leq \ell} \left(M_{i,j} \cdot\bm{v}\right).\]
	Exactly one of the two statements holds:
	\begin{enumerate}
		\item $F^n(\eps) \neq \bm{0}$ for all~$n$.
		\item $F^2(\eps)$ is comparable with~$\bm{0}$.
	\end{enumerate}
\end{lemma}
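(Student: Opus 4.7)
The plan is to prove the contrapositive: if $F^2(\eps) \bowtie \bm{0}$, then $F^n(\eps) \neq \bm{0}$ for every $n$.

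\textbf{Trivial reductions.} Each $M_{i,j}$ has nonnegative entries and so preserves the cones $\{\bm{v}\geq\bm{0}\}$ and $\{\bm{v}\leq\bm{0}\}$; the componentwise maximum preserves them as well, hence the same holds for $F$. Therefore, if $\eps$ is comparable with $\bm{0}$, every $F^n(\eps)$ stays on the same side, and in particular $F^2(\eps)$ is comparable. The same observation applied to $F(\eps)$ in place of $\eps$ disposes of the case where $F(\eps)$ is comparable with $\bm{0}$. The remaining situation is $\eps \bowtie \bm{0}$ and $F(\eps) \bowtie \bm{0}$; by the sign symmetry of the statement, fix $\varepsilon_1 < 0 < \varepsilon_2$ without loss of generality.

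\textbf{From signs to slope inequalities.} Writing $\eps = \rho(\cos\psi, \sin\psi)$ with $\rho > 0$ and $\psi \in (\pi/2,\pi)$, a short calculation shows that for every non-zero action $\gamma \in Act^*$, the sign of $\gamma(\eps)$ is governed entirely by the comparison between $sl(\gamma)$ and $\psi$: strictly positive iff $sl(\gamma) > \psi$, strictly negative iff $sl(\gamma) < \psi$. Because $F(\eps)_1 = \max_i \alpha_i(\eps)$ and $F(\eps)_2 = \max_j \beta_j(\eps)$ are computed independently, the sign pattern of $F(\eps)$ depends only on where $\psi$ lies relative to the extremal slopes $sl(\alpha_{hi}) \leq sl(\alpha_{lo})$ and $sl(\beta_{hi}) \leq sl(\beta_{lo})$; for instance, $F(\eps)_1 > 0$ iff $sl(\alpha_{lo}) > \psi$, and $F(\eps)_1 < 0$ iff $sl(\alpha_{hi}) < \psi$.

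\textbf{Iterating to the conclusion.} Assuming $F(\eps) \bowtie \bm{0}$, by symmetry take its sign pattern to be $(+,-)$, so that $F(\eps)$ lies in the fourth quadrant at some angle $\psi' \in (3\pi/2, 2\pi)$ and we obtain the first pair of strict slope inequalities $sl(\alpha_{lo}) > \psi > sl(\beta_{lo})$. In the fourth quadrant the sign criterion reverses: for $\gamma \in Act^*$, the sign of $\gamma(F(\eps))$ is determined by comparing $sl(\gamma)$ to $\psi' - \pi \in (\pi/2,\pi)$, with strict positivity now requiring $sl(\gamma) < \psi' - \pi$. The hypothesis $F^2(\eps) \bowtie \bm{0}$ produces a second pair of strict inequalities involving $sl(\alpha_{hi})$ and $sl(\beta_{hi})$. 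Combining these with $sl(\alpha_{hi}) \leq sl(\alpha_{lo})$ and $sl(\beta_{hi}) \leq sl(\beta_{lo})$, we will argue that the direction of $F^n(\eps)$ remains confined to an open angular sector that avoids every kernel direction in $\cup_{i,j} \ker M_{i,j}$. By \cref{preim-phi}, $F^{-1}(\bm{0})$ is contained in that union, so no iterate $F^n(\eps)$ is a preimage of $\bm{0}$, and $\bm{0}$ is never reached.

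\textbf{Expected obstacle.} The main difficulty is the last step: propagating the slope inequalities along the sign-alternating orbit and verifying that the angular confinement indeed rules out every kernel line. Zero actions require special treatment---they contribute horizontal or vertical lines sitting at the extremes of the slope order and pin one component of $F(\eps)$ to $0$ rather than making it strictly signed---as do iterates that land exactly on an action line. The crucial two-dimensional feature, absent in higher dimensions, is the total order of slopes on $Act^*$, which collapses the behaviour of the whole family $\{M_{i,j}\}$ to finitely many comparisons between the four extremal slopes $sl(\alpha_{lo}), sl(\alpha_{hi}), sl(\beta_{lo}), sl(\beta_{hi})$.
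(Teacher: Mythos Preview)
Your framework is the same as the paper's: reduce to $\eps\bowtie\bm{0}$, parametrise by the polar angle, and exploit the total order on the slopes $sl(\gamma)$ for $\gamma\in Act^*$. The translation ``$F(\eps)_1>0\iff sl(\alpha_{lo})>\psi$'' etc.\ is exactly what drives the paper's case analysis, and your observation that in $Q_2$ the decisive actions are $\alpha_{lo},\beta_{lo}$ while in $Q_4$ they are $\alpha_{hi},\beta_{hi}$ is the correct structural insight.

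There is, however, a genuine gap at the step ``by symmetry take the sign pattern of $F(\eps)$ to be $(+,-)$''. Once you have fixed $\eps\in\intr Q_2$ via the coordinate swap, that symmetry is spent: the sign pattern of $F(\eps)$ is no longer a free choice but is forced by the comparison of $sl(\alpha_{lo})$ and $sl(\beta_{lo})$. The pattern $(+,-)$ means $sl(\beta_{lo})<\psi<sl(\alpha_{lo})$ (the paper's Case~4), while the pattern $(-,+)$ means $sl(\alpha_{lo})<\psi<sl(\beta_{lo})$ (Case~3). These are not interchangeable; in the latter the orbit can remain in $\intr Q_2$ indefinitely and never visits $Q_4$, so the ``second pair of inequalities involving $sl(\alpha_{hi}),sl(\beta_{hi})$'' that you announce does not arise---the second step again compares $\psi'$ to $sl(\alpha_{lo}),sl(\beta_{lo})$. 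The paper handles this branch separately (Case~3b) by proving $F^n(\eps)\notin Q_4$ for all $n$, an inductive confinement argument of exactly the kind you sketch, but one you have in fact skipped.

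A second, softer point concerns your closing mechanism. You aim to confine the angle of $F^n(\eps)$ to an open sector disjoint from every kernel line. This can be made to work, but it is stronger than necessary and forces you to track the orbit through all later iterates (including the possibility that it becomes comparable at some step $m>2$, after which it could in principle drift towards $\bm{0}$ along an axis). The paper sidesteps this in the $(+,-)$ branch: under the slope inequalities that $F(\eps)\bowtie\bm{0}$ and $F^2(\eps)\bowtie\bm{0}$ force (Cases~4b.ii and~4b.iii), it shows directly that $F^{-1}(\bm{0})=\{\bm{0}\}$, so no orbit from any nonzero point ever reaches~$\bm{0}$. That argument is two short paragraphs and needs no induction on the orbit; you may find it cleaner than carrying the angular invariant forward.
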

\begin{proof}
	We can assume~$\eps\bowtie\bm{0}$, otherwise 2.\ holds trivially.
	Without loss of generality, let $\eps = (\varepsilon_1, \varepsilon_2)$ be such that $\varepsilon_1 < 0$ and $\varepsilon_2 > 0$, i.e., $\eps \in Q_2$.
	The discussion below is driven by the question 
	\begin{center}
		``When are both vectors $F(\eps)$ and $F^2(\eps)$ incomparable with~$\bm{0}$?''.
	\end{center}
	
	Crucially, the components of~$F(\bm{v})$ can be expressed as $\max$ of actions applied:
	\begin{equation}\label{each-act-max}
		F(\bm{v}) = 
		\begin{pmatrix} \max_{1\leq i \leq k}  \alpha_i(\bm{v}) \\ \max_{1\leq j \leq \ell}  \beta_j(\bm{v})
		\end{pmatrix}
	\end{equation}
	
	We first identify one special case: when both a zero $\alpha$- and a zero $\beta$-action exist.
	Then $F(\eps)_1 \geq \bm{0}\cdot\eps = 0$ and $F(\eps)_2 \geq \bm{0}\cdot\eps = 0$,
	thus showing that $F(\eps) \geq \bm{0}$.
	We now proceed under the assumption that zero actions exist for at most one state of~$\mdp$.
	Then $F^{-1}(\bm{0})$ is contained in the union of kernel lines.
	
		Observe that for a singular non-zero matrix~$M$, the vector $M\cdot\eps$ is always comparable with~$\bm{0}$.
		Let $\theta \in [\frac{\pi}{2}, \pi]$ be the angle of the kernel line of~$M_{lo}$.
		In our case distinction, we compare the angles~$\measuredangle \alpha_{lo}$ and $\measuredangle\beta_{lo}$ with $\theta$.
		\begin{enumerate}
			\item $\theta = \max(\measuredangle \alpha_{lo}, \measuredangle \beta_{lo})$: kernel of~$M_{lo}$ is the line with the largest angle. 
			We use~\cref{each-act-max} repetitively: either we have 
			$M_{lo}\cdot\eps \geq \bm{0}$ and hence $F(\eps) \geq \bm{0}$, or $M_{lo} \cdot\eps < \bm{0}$. 
			In the latter case, $F(\eps) < \bm{0}$ 
			because $\alpha(\eps) < 0$ and $\beta(\eps) < 0$ for any non-zero $\alpha,\beta$; 
			and there is a state with no zero actions.
			Therefore, $F(\eps)$ is comparable with~$\bm{0}$.
			\item $\theta < \measuredangle\gamma$ for some action $\gamma \in Act_1^* \cup Act_2^*$: there exists an action whose line has a larger angle than all kernel lines.
			For any $M_{i,j}$ and any non-zero $\bm{x} \in \ker M_{i,j} \cap Q_2$ we have $\gamma(\bm{x}) > 0$. 
			Then either $F(\bm{x})_1 > 0$ or $F(\bm{x})_2 > 0$, and hence $F(\bm{x}) \neq \bm{0}$. 
			This yields $\ker M_{i,j} \cap Q_2 = \{\bm{0}\}$ for all~$i,j$,
			and further, $F^{-1}(\bm{0}) \cap Q_2 = \{\bm{0}\}$. 
			
			We may assume $\gamma$ is an action with the largest angle. 
			Note that its line is \emph{not} a kernel line because $\measuredangle\gamma > \theta$.
			Therefore, this line corresponds to actions of one kind (either $\alpha$-, or $\beta$-). 
			It follows that $\measuredangle\alpha_{lo} \neq \measuredangle\beta_{lo}$ and we proceed with the following case distinction.
			\begin{enumerate}
				\item $\measuredangle \beta_{lo} > \measuredangle \alpha_{lo}$. 
				We show that $F^n(\eps) \not\in Q_4$ for all~$n \geq 0$. 
				We emphasise that no zero $\alpha$-action exists, as this would imply that the line of~$\beta_{lo}$ is a kernel line.
				Now, assume towards a contradiction that~$m$ is the least integer such that $F^m(\eps) \in Q_4$.
				Consider $\bm{y}:= F^{m-1}(\eps)$.
				Since $F(\bm{y})_1 \geq 0$ and $\bm{y} \not\in Q_4$, we have $\alpha_{lo}(\bm{y}) \geq 0$.
				This, in turn, together with $\bm{y} \not\in Q_4$, implies $\beta_{lo}(\bm{y}) > 0$. Therefore, $F(\bm{y})_2 = F^m(\eps)_2 > 0$ contradicting our assumption.
				We have $F^n(\eps) \not\in Q_4$ for all~$n \geq 0$.
				
				\item $\measuredangle \alpha_{lo} > \measuredangle \beta_{lo}$.
				If $\alpha_{lo}(\eps) \leq 0$, then also $\beta_{lo}(\eps) \leq 0$ and so $F(\eps) \leq 0$.
				Moreover, if $\beta_{lo}(\eps) \geq \bm{0}$, then $F(\eps) \geq 0$.
				We now assume $\beta_{lo}(\eps) < 0 < \alpha_{lo}(\eps)$.
				Note that no zero $\beta$-action exists, for the line of $\alpha_{lo}$ is not a kernel line.  %
				Then, $F(\eps) \in Q_4 \setminus \{\bm{0}\}$.
			\end{enumerate}
		\end{enumerate}
		
		From the discussion so far, we know that either $F(\eps)$ is comparable with~$\bm{0}$, 
		or $F^n(\eps) \neq \bm{0}$ for all $n \geq 0$,
		or $F(\eps) \in Q_4\setminus \{\bm{0}\}$ while $F^{-1}(\bm{0}) \cap Q_2 = \{\bm{0}\}$.
		We now investigate the third option.
		Here, we focus our attention on actions~$\alpha_{hi}$ and $\beta_{hi}$ rather than~$\alpha_{lo}$ and $\beta_{lo}$.

		Let $\eta \in [\frac{\pi}{2}, \pi]$ be the angle of the kernel line of~$M_{hi}$.
		\begin{enumerate}
			\item $\eta = \min(\measuredangle \alpha_{hi}, \measuredangle \beta_{hi})$: kernel of~$M_{hi}$ is the line with the smallest angle. Then 
			either $M_{hi}\cdot F(\eps) \geq \bm{0}$ and hence $F^2(\eps) \geq \bm{0}$, or $M_{hi} \cdot F(\eps) < \bm{0}$. In the latter case, $F^2(\eps) < \bm{0}$ because $\alpha(F(\eps)) < 0$ and $\beta(F(\eps)) < 0$ for any non-zero $\alpha,\beta$. 
			In either case,	$F^2(\eps)$ is comparable with~$\bm{0}$.
			\item $\eta > \measuredangle\gamma$ for some action $\gamma \in Act_1^* \cup Act_2^*$: there exists an action whose line has a smaller angle than all kernel lines.
		For any $M_{i,j}$ and any $\bm{x} \in \ker M_{i,j} \cap Q_4$ we have $\gamma(\bm{x}) > 0$. Then $F(\bm{x}) \neq \bm{0}$, and hence $F^{-1}(\bm{0}) \cap Q_4 = \{\bm{0}\}$.
			
			Recall that we already proved $F^{-1}(\bm{0}) \cap Q_2 = \{\bm{0}\}$ for the case under consideration.
		Moreover, since $F^{-1}(\bm{0})$ is contained in the union of kernel lines and thus in $Q_2 \cup Q_4$, we have $F^{-1}(\bm{0}) = \{\bm{0}\}$.
		Therefore, $F^n(\eps) \neq \bm{0}$ for all~$n\geq 0$.
		\end{enumerate}
		
		In most cases we were able to show $F(\eps)$ is comparable with~$\bm{0}$. This implies $F^2(\eps)$ is comparable with~$\bm{0}$, too.
		In all other cases we either directly showed that $F^2(\eps)$ is comparable with~$\bm{0}$, or derived $F^n(\eps) \neq \bm{0}$ for all~$n$.
\end{proof}

With~\cref{2d-comp-soon}, if $\Phi^2(\bm{s}) \bowtie \mu\Phi$ for $\bm{s} \in U_\delta(\mu\Phi)$, then the answer to the BOR problem is guaranteed negative. Otherwise, we can exploit the procedure of~\cref{compar-decid}.
We derive the following result
for Bellman operators in~$d=2$ with arbitrarily many pieces.
\begin{theorem}\label{2d-efs}
	The max-BOR problem is decidable with $d=2$.
\end{theorem}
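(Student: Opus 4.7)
The plan is to combine the machinery already assembled in the paper---Proposition~\ref{t-not-fp} for targets different from the fixed point, Theorem~\ref{compar-decid} for initial vectors comparable with the fixed point, Corollary~\ref{event-tight} (eventual tightness), the reduction to PFR of Proposition~\ref{probequiv}, and the new dichotomy of Lemma~\ref{2d-comp-soon}---into a single decision procedure for $d=2$.

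First, I would reduce to the hard case. If $\bm{t}\neq\mu\Phi$, Proposition~\ref{t-not-fp} immediately provides the algorithm, so assume $\bm{t}=\mu\Phi$. If $\bm{s}$ is comparable with $\mu\Phi$, Theorem~\ref{compar-decid} handles the instance. Hence the remaining case is $\bm{t}=\mu\Phi$ together with $\bm{s}\bowtie\mu\Phi$.

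Next, I would use Corollary~\ref{event-tight} to compute a bound $N\in\NN$ such that every step $\Phi^{n+1}(\bm{s})$ with $n\geq N$ is obtained by applying only tight actions; equivalently, $\Phi^{N}(\bm{s})\in U_\delta(\bm{t})$ for the $\delta$ of Lemma~\ref{nbhd-tight}. Along the way I would explicitly compute $\Phi^0(\bm{s}),\Phi(\bm{s}),\dots,\Phi^N(\bm{s})$: if any of these equals $\bm{t}$, answer yes; if any becomes comparable with $\mu\Phi$ without equalling it, invoke Theorem~\ref{compar-decid} on that intermediate vector and halt. Otherwise set $\bm{s}':=\Phi^N(\bm{s})$, which is incomparable with $\mu\Phi$ and lies in $U_\delta(\bm{t})$.

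Finally, I would translate to the product-family setting via Proposition~\ref{probequiv}, letting $\eps:=\bm{s}'-\mu\Phi$ and considering the induced map $F$. Compute $F^2(\eps)$ and apply the dichotomy of Lemma~\ref{2d-comp-soon}: (i) if $F^2(\eps)=\bm{0}$, answer yes, since this corresponds to $\Phi^{N+2}(\bm{s})=\mu\Phi$; (ii) if $F^2(\eps)$ is comparable with $\bm{0}$ but nonzero, then $\Phi^{N+2}(\bm{s})$ is comparable with $\mu\Phi$ but distinct from it, and Theorem~\ref{compar-decid} finishes the job; (iii) otherwise $F^2(\eps)\bowtie\bm{0}$, in which case the dichotomy forces $F^n(\eps)\neq\bm{0}$ for all $n$, and Proposition~\ref{probequiv} yields the negative answer $\Phi^n(\bm{s})\neq\mu\Phi$ for all $n\geq N$.

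The genuinely hard work is already contained in Lemma~\ref{2d-comp-soon}, where the total order on action slopes in dimension two forces the iterations to commit to a comparable direction within two steps; given that lemma, the main obstacle for Theorem~\ref{2d-efs} is merely verifying that the case split above is exhaustive and that comparability after at most $N+2$ iterations lets us hand the instance to Theorem~\ref{compar-decid} without circularity.
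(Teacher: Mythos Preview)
Your proposal is correct and follows essentially the same route as the paper: reduce to $\bm{t}=\mu\Phi$ with $\bm{s}\bowtie\mu\Phi$ via Proposition~\ref{t-not-fp} and Theorem~\ref{compar-decid}, iterate into $U_\delta(\bm{t})$ using the bound of Corollary~\ref{event-tight}, pass to the PFR formulation through Proposition~\ref{probequiv}, and then let the dichotomy of Lemma~\ref{2d-comp-soon} decide between a negative answer and handing the (now comparable) iterate back to Theorem~\ref{compar-decid}. Your three-way split on $F^2(\eps)$ is a slightly more explicit unpacking of the same argument the paper gives in the paragraph preceding the theorem.
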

\begin{figure}[t]
	\begin{minipage}[b]{0.45\columnwidth}
		\begin{center}
			\begin{tikzpicture}[on grid,auto]
				\node[dot] (n0) {};
				\node[state] (s1) [above right = 1 and 2 of n0] {$\mathstrut s_1$};
				\node[state] (s2) [above right = -1 and 2 of n0] {$\mathstrut s_2$};
				\node[dot] (n3) [above right = -1 and 1.5 of s1] {};
				\node[] (me) [above left=1 and 1.5 of n1] {\small$\mdp_3$:};
				\node[state, accepting] (t) [above=1 of s1] {$\mathstrut t$};
				
				\path[->]
				(n0) edge [bend right] node[pos=0.6,inner sep=1pt] {1/2} (s1)
				(n0) edge [bend right] node[swap,pos=0.6,inner sep=1pt] {1/3} (s2)
				(s1) edge [bend left] node[pos=0.3,inner sep=1pt] {$\alpha_2$} (n3)
				(s2) edge [bend right] node[swap,pos=0.5,inner sep=1pt] {$\beta_1$} (n3)
				(s1) edge [bend right] node[swap,pos=0.5,inner sep=1pt] {$\alpha_1$} (n0)
				(n3) edge [bend left] node[swap,pos=0.6,inner sep=1pt] {1/2} (s1)
				(n3) edge [bend right] node[swap,pos=0.6,inner sep=1pt] {1/5} (s2)
				(n3) edge [bend right] node[swap,pos=0.6,inner sep=1pt] {1/4} (t)
				(n0) edge [bend left] node[swap,pos=0.8,inner sep=1pt] {5/36} (t)
				;
			\end{tikzpicture}
		\end{center}	
	\caption{An MDP $\mdp_3$ of~\cref{ex:2d}.}
	\label{fig:2d}
	\end{minipage}\hfill
	\begin{minipage}[b]{0.5\columnwidth}
	\begin{center}
		
		\begin{tikzpicture}
			\begin{axis}[
				axis lines=middle,
				xmin=-0.19,xmax=0.19,ymin=-0.19,ymax=0.19,
				xtick distance=0.1,
				ytick distance=0.1,
				minor tick num = 4,
				xlabel=$x$,
				ylabel=$y$,
				grid=both, 
				grid style={very thin,densely dotted,black!20}]
				\addplot [domain=0.12:-0.12,samples=2, color=red] {x*(-3)/2} node[below left]{$\alpha_1$};
				\addplot [domain=0.07:-0.07,samples=2, color=blue] {x*(-5)/2} node[right]{$\alpha_2 = \beta_1$};
				\addplot [domain=0.07:-0.07,samples=2, color=blue] {x*(-5)/2} node[right]{};
				
				\addlegendentry{$\alpha_{lo} = \alpha_1$}
				\addlegendentry{$\alpha_{hi} = \alpha_2$}		
				\addlegendentry{$\beta_{lo} = \beta_{hi} = \beta_1$}
				
				\addplot [
				only marks,
				mark=ball,
				mark size=2pt,
				point meta=explicit symbolic,
				nodes near coords,
				every node near coord/.append style={yshift=3pt,anchor=west},
				] coordinates {
					(-31/315,1/6) [$\eps$]
					(0,0)    [$\bm{0}$]
				};
			
				\addplot [
			only marks,
			mark=ball,
			mark size=2pt,
			point meta=explicit symbolic,
			nodes near coords,
			every node near coord/.append style={xshift=2pt,yshift=-3pt,anchor=west},
			] coordinates {
				(2/315, -1/63) [$F(\eps)$]
			};
			\end{axis}
		\end{tikzpicture}
	\end{center}
\caption{$\mdp_3$ illustrating Case~2(b) for~$\theta$ and Case~1 for~$\eta$.}
\label{fig:2d-plot}
\end{minipage}
\end{figure}
\begin{example}\label{ex:2d}
	We provide an example~$\mdp_3$ with $d=2$,
	where $\eps, F(\eps)$ are vectors incomparable with~$\bm{0}$,
	and $F^2(\eps) = \bm{0}$.
	Here, $\alpha_1(x_1, x_2) = \frac{1}{2}x_1 + \frac{1}{3}x_2$, $\alpha_2(x_1,x_2) = \beta_1(x_1, x_2) = \frac{1}{2}x_1 +\frac{1}{5}x_2$. 
	That is, the example corresponds to the Case~2(b) for $\theta$ and Case~1 for $\eta$:
	\[\theta = \eta = \measuredangle \beta_1 = \measuredangle \alpha_2 < \measuredangle \alpha_1.%
	\]
	Let $\eps = (-\frac{31}{315},\frac{1}{6})$. 
	Then,
	$\beta_1(\eps) = \alpha_2(\eps) < 0 < \alpha_1(\eps)$;
	 we get~$F(\eps) = (\alpha_1(\eps),\beta_1(\eps)) = (\frac{2}{315},-\frac{1}{63}) \in Q_4$. 
	Further, $\alpha_2(F(\eps)) > \alpha_1(F(\eps))$ and $F^2(\eps) = (\alpha_2(F(\eps)),\beta_1(F(\eps))) = (0,0)$.
\end{example}

\begin{lemma}\label{2d-comp-soon-min}
	Given a product family $\{M_{1,1}, \dots, M_{k,\ell}\} \subset \QQ^{2\times 2}$ of substochastic matrices
	and a vector~$\eps = (\varepsilon_1, \varepsilon_2)$. 
	Let the map~$F$ be defined by
	\[F(\bm{v}) = \min_{1\leq i \leq k, 1\leq j \leq \ell} \left(M_{i,j} \cdot\bm{v}\right).\]
	Exactly one of the two statements holds:
	\begin{enumerate}
		\item $F^n(\eps) \neq \bm{0}$ for all~$n$.
		\item $F^2(\eps)$ is comparable with~$\bm{0}$.
	\end{enumerate}
\end{lemma}
\begin{proof}
	As in the proof of~\cref{2d-comp-soon} we can assume $\eps = (\varepsilon_1, \varepsilon_2)$ such that $\varepsilon_1 < 0$ and $\varepsilon_2 > 0$, i.e., $\eps \in Q_2$.
	The components of~$F(\bm{v})$ can be expressed as $\min$ of actions applied:
	\begin{equation}\label{each-act-min}
		F(\bm{v}) = 
		\begin{pmatrix} \min_{1\leq i \leq k}  \alpha_i(\bm{v}) \\ \min_{1\leq j \leq \ell}  \beta_j(\bm{v})
		\end{pmatrix}
	\end{equation}
	
	We identify a special case: when both a zero $\alpha$- and a zero $\beta$-action exist.
	Then, for any $\eps$, we have $F(\eps)_1 \leq \bm{0}\cdot\eps = 0$ and $F(\eps)_2 \leq \bm{0}\cdot\eps = 0$, and hence $F(\eps) \leq \bm{0}$. 
	This yields the second item of the statement. 
	We now proceed under the assumption that zero actions exist for at most one state of~$\mdp$.
	Then $F^{-1}(\bm{0})$ is contained in the union of kernel lines.
	
	Let $\eta \in [\frac{\pi}{2}, \pi]$ be the angle of the kernel line of~$M_{hi}$.
	Next, we compare the angles~$\measuredangle \alpha_{hi}$ and $\measuredangle\beta_{hi}$ with $\eta$.
	\begin{enumerate}
		\item $\eta = \min(\measuredangle \alpha_{hi}, \measuredangle \beta_{hi})$: kernel of~$M_{hi}$ is the line with the smallest angle.
		 Either we have 
		 $M_{hi}\cdot\eps \leq \bm{0}$ and hence $F(\eps) \leq \bm{0}$, or $M_{hi} \cdot\eps > \bm{0}$. 
		 In the latter case, $F(\eps) > \bm{0}$ 

		 because $\alpha(\eps) > 0$ and $\beta(\eps) > 0$ for any non-zero $\alpha,\beta$; 
		 and there is a state with no zero actions.
		 Therefore, $F(\eps)$ is comparable with~$\bm{0}$.
		\item $\eta > \measuredangle \gamma$ for some action $\gamma \in Act^*_1 \cup Act^*_2$. 
		For any $M_{i,j}$ and any non-zero $\bm{x} \in \ker M_{i,j} \cap Q_2$ we have $\gamma(\bm{x}) < 0$. 
		Then either $F(\bm{x})_1 < 0$ or $F(\bm{x})_2 < 0$, and hence $F(\bm{x}) \neq \bm{0}$. 
		This yields $\ker M_{i,j} \cap Q_2 = \{\bm{0}\}$ for all~$i,j$,
		and further, $F^{-1}(\bm{0}) \cap Q_2 = \{\bm{0}\}$. 
		
		Similarly to the proof of~\cref{2d-comp-soon}, 
		only actions of one kind have angle~$\measuredangle\gamma$ 
		if $\measuredangle\gamma$ is chosen to be minimal;
		hence we derive $\measuredangle\alpha_{hi} \neq \measuredangle\beta_{hi}$. 
		We now identify two complementary subcases:
		\begin{enumerate}
			\item $\measuredangle\beta_{hi} > \measuredangle\alpha_{hi}$. We show that $F^n(\eps) \not\in Q_4$ for all~$n \geq 0$.
			We emphasise that no zero $\beta$-action exists, as this would imply that the line of $\alpha_{hi}$ is a kernel line.
			Now, assume towards a contradiction that~$m$ is the least integer such that $F^m(\eps) \in Q_4$.
			Consider $\bm{y}:= F^{m-1}(\eps)$.
			Since $F(\bm{y})_2 \leq 0$ and $\bm{y} \not\in Q_4$, we have $\beta_{hi}(\bm{y}) \leq 0$.
			This, in turn, together with $\bm{y} \not\in Q_4$, implies $\alpha_{hi}(\bm{y}) < 0$. Therefore, $F(\bm{y})_1 < 0$ contradicting our assumption.
			We have $F^n(\eps) \not\in Q_4$ for all~$n \geq 0$.
			
			\item $\measuredangle\alpha_{hi} > \measuredangle\beta_{hi}$. If $\beta_{hi}(\eps) \geq 0$, then $\alpha(\eps) \geq 0$ and $\beta(\eps) \geq 0$ for all~$\alpha, \beta$ and so $F(\eps) \geq 0$.
			Moreover, if $\alpha_{hi}(\eps) \leq 0$, then $\beta_{hi}(\eps) \leq 0$ and $F(\eps) \leq \bm{0}$.
			We now assume $\beta_{hi}(\eps) < 0 < \alpha_{hi}(\eps)$. 
			No zero $\alpha$-action exists because the line of $\beta_{hi}$ is not a kernel line. 
			Therefore, $F(\eps) \in Q_4\setminus\{\bm{0}\}$.
		\end{enumerate}
	\end{enumerate}

	So far, either $F(\eps)$ is comparable with~$\bm{0}$, 
	or $F^n(\eps) \neq \bm{0}$ for all $n \geq 0$,
	or $F(\eps) \in Q_4\setminus \{\bm{0}\}$ while $F^{-1}(\bm{0}) \cap Q_2 = \{\bm{0}\}$.
	We now investigate the third option.
	Here, we focus our attention on actions~$\alpha_{lo}$ and $\beta_{lo}$.
	
	Let $\theta \in [\frac{\pi}{2}, \pi]$ be the angle of the kernel line of~$M_{lo}$.
	\begin{enumerate}
		\item $\theta = \max(\measuredangle \alpha_{lo}, \measuredangle \beta_{lo})$: kernel of~$M_{lo}$ is the line with the largest angle. Then 
		either $M_{lo}\cdot F(\eps) \leq \bm{0}$ and hence $F^2(\eps) \leq \bm{0}$, or $M_{lo} \cdot F(\eps) > \bm{0}$. In the latter case, $F^2(\eps) > \bm{0}$ because $\alpha(F(\eps)) > 0$ and $\beta(F(\eps)) > 0$ for any non-zero $\alpha,\beta$; 
		and there is a state with no zero actions.
		\item $\theta < \measuredangle\gamma$ for some action $\gamma \in Act_1^* \cup Act_2^*$: there exists an action whose line has a larger angle than all kernel lines.
		For any $M_{i,j}$ and any $\bm{x} \in \ker M_{i,j} \cap Q_4$ we have $\gamma(\bm{x}) < 0$. Then $F(\bm{x}) \neq \bm{0}$, and hence $F^{-1}(\bm{0}) \cap Q_4 = \{\bm{0}\}$. 
		
		We proved earlier $F^{-1}(\bm{0}) \cap Q_2 = \{\bm{0}\}$ for the case under consideration.
		Since $F^{-1}(\bm{0})$ is contained in the union of kernel lines and thus in $Q_2 \cup Q_4$, we have $F^{-1}(\bm{0}) = \{\bm{0}\}$.
		Therefore, $F^n(\eps) \neq \bm{0}$ for all~$n\geq 0$.
	\end{enumerate}
	
	In most cases we were able to show $F(\eps)$ is comparable with~$\bm{0}$. This implies $F^2(\eps)$ is comparable with~$\bm{0}$, too.
	In all other cases we either directly showed that $F^2(\eps)$ is comparable with~$\bm{0}$, or derived $F^n(\eps) \neq \bm{0}$ for all~$n$.
\end{proof}
Similarly to the $\max$-case, 
we may use~\cref{2d-comp-soon-min} to show full decidability of $\min$-BOR in the two-dimensional case. 
If $\Phi^2(\bm{s}) \bowtie \mu\Phi$ for $\bm{s} \in U_\delta(\mu\Phi)$, then the answer to the BOR problem is guaranteed negative. Otherwise, we can exploit the procedure of~\cref{compar-decid}.
\begin{theorem}\label{2d-bor-min}
	The min-BOR problem is decidable with $d=2$.
\end{theorem}

\section{Related Work and Discussion}
\label{relatedwork}
In our work, we have outlined a series of phenomena inherent to the iterative application of Bellman operators. 
Crucial was the fact that the update coefficients were non-negative.

Notice that any PAM on the domain~$[0,1]^d$ can be represented as a 
nested min-max of its affine components~\cite{Ovchinnikov2002maxmin,Gorokhovik1994maxmin}.
This supports the relevance of PAMs with updates defined by the maximum/minimum of affine pieces, 
Bellman operators of MDPs being among them.
Generalising our results beyond Bellman operators of MDPs 
using monotonicity and fixed-point convergence is a subject of our future work.
In fact, there are characterisations of Bellman operators by convex polytopes. 
\begin{proposition}[characterisation of $\Phi_{\max}$]
	Given a function $F\colon [0, 1]^d \rightarrow [0, 1]^d$, 
	the following are equivalent:
	\begin{enumerate}
		\item $F$ is a Bellman operator $\Phi_{\max}$ of an MDP.
		\item for any $i\in \{1,\dots, d\}$, its epigraph $\mathrm{epi}(\pi_i \circ F)\subseteq [0, 1]^{d+1}$ is a convex polytope, and 
		it can be written by 
		\[
		\mathrm{epi}(\pi_i \circ F) = \{(\bm{x}, t)\in [0, 1]^d\times [0, 1] \mid (\bm{a}_i)^{\top}\cdot \bm{x} + b_i \leq t,\, \forall i\in I\}
		\]
		with a finite set $I$, a vector $\bm{a}_i\in [0, 1]^d$, and a scalar $b_i\in [0, 1]$, where $b_i + \sum_{j} (a_{i})_j \leq 1$. 
	\end{enumerate}
\end{proposition}

A similar characterisation for $\Phi_{\min}$ is also possible. For instance, removing the assumption of $b_i + \sum_{j} (a_{i})_j \leq 1$ and seeing whether 
we can still get the same decidability results would be of interest. 

It is worth pointing out that we capture a large class of PAMs 
for which the assumptions of the known techniques do not hold.
In proving decidability for~$d=2$,
we do not impose the restrictions used in works on the one-dimensional version of the problem.
Bellman operators are, in general, neither injective as in~\cite{ghahremani2023injective} nor complete~\cite{bournez2018onedpam}, nor even surjective.

Our techniques and results can be applied to other problems about PAMs.
Consider, in particular, the universally quantified version of~BOR:
given a vector $\bm{t} \in [0,1]^d\cap \QQ$ and a Bellman operator~$\Phi: [0,1]^d \rightarrow [0,1]^d$,
does there exist~$n \in \NN$ \emph{for every}~$\bm{s} \in [0,1]^d\cap \QQ$ such that $\Phi^n(\bm{s}) = \bm{t}$?
This is the mortality problem, known to be undecidable for general PAMs in dimension~2~\cite{BenAmram2015mortal,blondel2001note}.
For Bellman operators, however, it is equivalent to solving the BOR problem for $\bm{s} = \bm{0}$ and for $\bm{s} = \bm{1}$.
Indeed, both BOR instances are ``yes''-instances if and only if every instance with $\bm{0} \leq \bm{s} \leq \bm{1}$ is a~``yes''.
We can answer this using our novel algorithm (\cref{compar-decid}), hence mortality for Bellman operators is decidable in all dimensions.

Without restrictions on the dimension~$d$,
the decidability of the BOR problem remains open.
Notably, if there exist states with multiple tight actions,
then the dynamics of the sequence $\langle F^n(\eps)\rangle_{n\in \NN}$ as defined in our PFR discussion is intricate for~$\eps \bowtie \bm{0}$.
For more works discussing the iterative dynamics of map~$F$ see~\cite{Sladky1980max,Zijm1984max}.
We highlight that
$F^n(\eps)$ does not have a closed form, in contrast to $M^n\eps$ for a fixed matrix~$M$.
While the entries of the vector~$M^n\eps$ are terms of linear recurrence sequences, 
the class of such sequences is not closed under max~\cite{Heerdt2018maxlrs}.
The behaviour of entries in~$F^n(\eps)$ is even subtler than that, as 
witnessed by the non-positionality discussion and~\cref{ex:3d}.

We emphasise that due to the undecidability of the vector reachability problem
for matrix semigroups~\cite{Bell2008undec}, following~\cref{not-just-reach} we need to argue about ultimate non-positivity of~$F^n(\eps)$. %
However, even for one~$M$, deciding whether $(M^n\eps)_1 \leq 0$ for some~$n$ is equivalent to the positivity problem~\cite{Ouaknine2014pos5}, open in dimension $d > 5$.
This hardness carries on to the stochastic matrices~\cite{Mihir2024ergodic}.
Moreover, the question whether there exists~$n$ 
such that $M^n\eps \leq \bm{0}$
corresponds to the polyhedron-hitting problem, and is Diophantine-hard~\cite{ChonevOW15} for general matrices~$M$.

\paragraph{Acknowledgement}
We would like to express our gratitude to the anonymous reviewers for their valuable comments and suggestions.
We further thank Ichiro Hasuo and Laura Kovács for their helpful feedback.
We also thank James Worrell for his suggestion to investigate the characterisations of Bellman operators by the convexity of their graphs.

\includegraphics[height=1em]{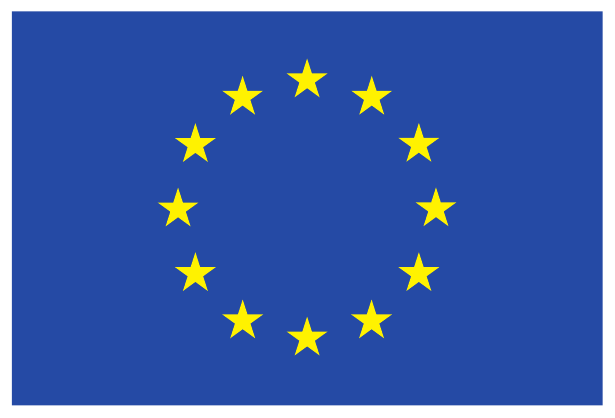} This paper is part of a project that has received funding from the European Research Council (ERC) under the European Union's Horizon 2020 research and innovation program (grant agreement No.~10103444).
The authors acknowledge the support of the ERC consolidator grant ARTIST 101002685. 
K.\ W.\ is supported by the JST grants No. JPMJAX23CU and JPMJPR25KD.

\bibliographystyle{plain}
\bibliography{references}
\end{document}